\newtheorem{definition}{Definition}[section]
\newtheorem{lemma}[definition]{Lemma}
\newtheorem{theorem}[definition]{Theorem}
\newtheorem{oss}[definition]{Observation}
\def\squareforqed{\hbox{\rlap{$\sqcap$}$\sqcup$}}
\def\qed{\ifmmode\squareforqed\else{\unskip\nobreak\hfil
\penalty50\hskip1em\null\nobreak\hfil\squareforqed
\parfillskip=0pt\finalhyphendemerits=0\endgraf}\fi}
\def\endenv{\ifmmode\;\else{\unskip\nobreak\hfil
\penalty50\hskip1em\null\nobreak\hfil\;
\parfillskip=0pt\finalhyphendemerits=0\endgraf}\fi}
\newenvironment{proof}{\noindent \textbf{{Proof~}}}{\hfill\qed}
\newenvironment{remark}{\noindent \textbf{{Remark~}}}
\mathchardef\ordinarycolon\mathcode`\:
\def\vcentcolon{\mathrel{\mathop\ordinarycolon}}
\renewcommand*{\@fnsymbol}[1]{\ensuremath{\ifcase#1\or \ast\or \P\or \natural\or \flat\or \sharp\or \bullet\or \dagger\or \ddagger\or \|\or **\or \dagger\dagger\or \ddagger\ddagger \else\@ctrerr\fi}}
\def\cH{{\mathcal H}}
\def\cN{{\mathcal N}}
\newcommand{\be}{\begin{equation}}
\newcommand{\ee}{\end{equation}}
\newcommand{\ba}{\begin{align}}
\newcommand{\ea}{\end{align}}
\newcommand{\bea}{\begin{eqnarray}}
\newcommand{\eea}{\end{eqnarray}}
\begin{document}

\title{Profitable entanglement for channel discrimination}

\author{Samad Khabbazi Oskouei}
 \email{kh$\_$oskuei@yahoo.com}
 \affiliation{Department of Mathematics, Varamin-Pishva Branch
 Islamic Azad University, Varamin, 33817-7489 Iran}

\author{Stefano Mancini}
 \email{stefano.mancini@unicam.it}
 \affiliation{School of Science and Technology, University of Camerino,
              Via Madonna delle Carceri 9, I-62032 Camerino, Italy}
 \affiliation{INFN--Sezione Perugia, Via A. Pascoli, I-06123 Perugia, Italy}

\author{Milajiguli Rexiti}
 \email{milajiguli.milajiguli@gmail.com}
 \affiliation{School of Science and Technology, University of Camerino,
              Via Madonna delle Carceri 9, I-62032 Camerino, Italy}
 \affiliation{INFN--Sezione Perugia, Via A. Pascoli, I-06123 Perugia, Italy}


\begin{abstract}
We investigate the usefulness of side entanglement in discriminating between two generic qubit channels, {\  up to unitary pre- and post-processing,} and  
determine exact conditions under which it does enhance (as well as conditions under which it does not) the 
success probability.
This is done in a constructive way by first analyzing the problem for channels that are extremal in the set of completely positive and trace-preserving qubit linear maps and then for channels that are inside such a set. 
\end{abstract}
 
\maketitle

Keywords: quantum channels, entanglement, statistical distance. 


\section{Introduction}\label{sec:intro}

The relevance of discerning between two (or more) quantum channels stems from the fact that any physical process can be described as a quantum channel \cite{acin,Aneilsen,Sacchi2005,Mying}. It becomes even more pronounced when quantum channels are employed to encode information, like in quantum reading \cite{SPSM}.
{\  Broadly speaking,} the task of channel discrimination is challenging \cite{Jrenes,Brosgen}. In fact, albeit for a given input it can be traced back to the problem of state discrimination \cite{Hel}, it involves optimization of overall inputs. The performance is usually expressed in terms of minimum error probability.
Entanglement offers the possibility of enhancing such a performance by resorting to a strategy that uses a pair of systems such that only one undergoes the effect of the channel while both are measured (a strategy usually referred to as \emph{side entanglement})[20]. Remarkably this strategy can also enhance the distinguishability of entanglement-breaking channels \cite{MSacchi}. {\  It is known that an ancilla with the same
dimension as the input of the channels is always sufficient for optimal discrimination,
while this is not the case for the output dimension whenever
it is smaller \cite{Puz17}.}

However, until now, there is no clear and comprehensive picture of which channels can benefit from side entanglement, even at the level of the qubit.
Here we clarify this point by resorting to the characterization of qubit channels in terms of affine maps on 
$\mathbb{R}^3$ \cite{RSW}.

We find conditions under which side entanglement does enhance (as well as conditions under which it does not) the performance in discriminating between two generic qubit channels, {\  up to unitary pre- and post-processing}.
This is achieved in a constructive way by first analyzing the problem for channels that are extremal in the set 
$\mathfrak{N}$ of completely positive and trace-preserving qubit linear maps, {\  up to unitary pre- and post-processing}, and then for channels that are inside such a set.


\section{Preliminaries}\label{sec:pre}

A quantum channel is a linear, completely positive and trace preserving (CPTP) map on the set ${\cal D}(\cH)$ of density operators over a Hilbert space $\cH$. Every quantum channel ${\  {\cal M}}:{\cal D}(\cH)\to {\cal D}(\cH)$ can be expressed in the operator sum (Kraus) representation:
\be\label{kr}
{\  {\cal M}}(\rho)=\sum_j K_j\rho K_j^\dag,
\ee
where $K_j:\cH\to\cH$ are linear operators (called Kraus operators) satisfying the normalization condition $\sum_j K_j^\dag K_j=I$. The number of non-zero operators in the Kraus representation is called the Kraus rank.

In the following we will consider only qubit channels. For them, it is known that 
the maximum Kraus rank is four. 

In such a case, given a density operator $\rho=\frac{1}{2}\left(I + \boldsymbol{r} \cdot \boldsymbol{\sigma}\right)$ with $\boldsymbol{r}\in\mathbb{R}^3$,
$\|\boldsymbol{r}\|\leq 1$, 
and $\boldsymbol{\sigma}$ the vector of Pauli operators,
the channel ${\  {\cal M}}$ can always be represented as an affine transformation
in $\mathbb{R}^3$, that is
\be
\boldsymbol{r}\stackrel{{\  {\cal M}}}{\longrightarrow} 
\boldsymbol{M} \, \boldsymbol{r} + \boldsymbol{t},
\ee
where {\  $ \boldsymbol{M}$ is a $3\times 3$ real matrix and $ \boldsymbol{t}=(t_1,t_2,t_3)^T\in \mathbb{R}^3$.}

The matrix $\boldsymbol{M}$, via orthogonal transformations, can be brought to diagonal form 
${\  \boldsymbol{N}}={\rm diag}(\lambda_1,\lambda_2,\lambda_3)$.
{\ 
By virtue of that, any qubit channel ${\cal M}$ can be written as 
\begin{equation}
{\cal M}(\bullet)=V{\cal N}\left(U\bullet U^\dag\right) V^\dag,
\end{equation}
where ${\cal N}$ is a channel with associated diagonal matrix $ \boldsymbol{N}$ and vector $\boldsymbol{t}$, while $U$, $V$ are suitable (pre- and post-processing) quibit unitaries.
}

Then, the image through $\cN$ of the boundary of the Bloch sphere (pure states) is the ellipsoid described by the equation
\be
\left(\frac{x-t_1}{\lambda_1}\right)^2+ \left(\frac{y-t_2}{\lambda_2}\right)^2
+\left(\frac{z-t_3}{\lambda_3}\right)^2=1,
\ee
where $\lambda_k$s define the length of the axes of the ellipsoid and
$\boldsymbol{t}$ its center.

Note that not all ellipsoids within the Bloch sphere correspond to a quantum channel (completely
positive map).

{\  Now let us consider the set of qubit channels
\be
\mathfrak{N}:=\left\{ \cN:{\cal D}(\mathbb{C}^2)\to {\cal D}(\mathbb{C}^2)\,  | \, 
\cN \; \text{is linear and CPTP and} \; {\cal N} \leftrightarrow ( {\boldsymbol{N}}, {\boldsymbol{t}}) \right\}. 
\ee
}
When $\cN$ belongs to the closure of extreme points of $\mathfrak{N}$, it can be parametrized by 
{\  (see Sec. II.1 of Ref. \cite{RMP14} and reference therein)}
\be\label{tM}
\boldsymbol{t}=\begin{pmatrix}
0 \\ 0 \\ \sin(\phi-\theta)\sin(\phi+\theta)
\end{pmatrix},
\quad
{\  \boldsymbol{N}}=\begin{pmatrix}
\cos(\phi-\theta) & 0 & 0 \\ 0 & \cos(\phi+\theta) & 0 \\ 0 & 0 & \cos(\phi-\theta)\cos(\phi+\theta)
\end{pmatrix},
\ee
{\  with $\theta,\phi\in[0,\pi]$.} 
In term of Kraus operators we have
\be\label{kmatrix}
K_0=\begin{pmatrix}
\cos\theta && 0\\
0 && \cos\phi
\end{pmatrix}, \quad
K_1=\begin{pmatrix}
0 && \sin{ \phi} \\
\sin{\theta}  && 0
\end{pmatrix}.
\ee
When $\boldsymbol{t}=\boldsymbol{0}$ (unital channels) the extreme points of $\mathfrak{N}$ are known to be the maps that conjugate by a unitary matrix and in the $\lambda_k$ representation they correspond to four corners of a tetrahedron. The channels on the edges of such a tetrahedron correspond to ellipsoids that have exactly two points in common with the boundary of the Bloch
sphere.
These are called \emph{quasi-extreme points}
{\  because looking at extremality within $\mathfrak{N}$
they belong to the closure of extreme points of $\mathfrak{N}$,} but are not true extreme points.

When $\boldsymbol{t}\neq\boldsymbol{0}$ (non-unital channels) the extreme points of
$\mathfrak{N}$ correspond to maps for which a translation allows
the ellipsoid to touch the boundary of the Bloch sphere at two points.[21]

It is known \cite{RSW} that any quantum channel on $\mathfrak{N}$ can be written as convex combination
of two maps in the closure of extreme points.
In particular, a unital channel can be
written as a convex combination of two maps on the edges of the
tetrahedron.


\subsection{Channel Discrimination}

Let $\cN_{\omega_1}$ and $\cN_{\omega_2}$ be two qubit channels
characterized respectively by parameters $\omega_1,\omega_2\in\Omega$.[22]

Suppose each of the above channels appears with probability $\frac{1}{2}$,
and we want to discriminate between them by inputting a probe state $\rho\in{\cal D}(\mathbb{C}^2)$ and measuring the output.
The problem can be traced back to quantum states discrimination.
According to \cite{Hel}, given two quantum states
$\cN_{\omega_0} \left(\rho\right)$ and
$\cN_{\omega_1} \left(\rho\right)$
appearing with equal probability,
the maximum probability of success in discriminating between them reads
\be\label{sucpr}
P_{succ}{\left(\rho\right)}=\frac{1}{2}\left(1+\frac{1}{2}
\left\|{\  \Delta_{\cal N}(\rho)} \right\|_1
\right),
\ee
where
\be\label{dif1}
\Delta_{\cal N}(\rho):= \cN_{\omega_1} \left(\rho\right)- \cN_{\omega_2} \left(\rho\right),
\ee
and $\|\cdot\|_1$ denotes the trace-norm of an operator.

To achieve \eqref{sucpr}, the optimal measurement turns out to be the projection onto positive and negative eigenspaces of the operator $\Delta_{\cal N}(\rho)$.

Then, the problem of channel discrimination becomes finding the optimal input state $\rho$ which maximizes the trace distance of the output states.

This optimization can be restricted to the set of pure states ${\cal P}(\mathbb{C}^2)$ thanks to the convexity of the trace norm.


\subsection{Side Entanglement}

When a reference system is accessible for measurement, entanglement 
across reference and main systems 
can be exploited for the purpose of channel discrimination. 
We refer to this strategy as channel discrimination with {\it side entanglement}.

In this case we have to consider
\be
\Delta_{\text{id}\otimes {\cal N}}(\rho) := (\text{id}\otimes\cN_{\omega_1}) \left(\rho\right)
-(\text{id}\otimes \cN_{\omega_2})\left(\rho\right),
\ee
where $\rho\in{\cal P}(\mathbb{C}^2\otimes \mathbb{C}^2)$, assuming without loss of generality the reference system isomorphic to the main one.

According to Eq.\eqref{sucpr}, side entanglement turns out to be profitable if there exists a parameter's region (subset of 
$\Omega$) where the following inequality holds true:
\be\label{ineqnorm1}
\sup_{\rho\in{\cal P}(\mathbb{C}^2\otimes \mathbb{C}^2)}\|\Delta_{\text{id}\otimes {\cal N}}(\rho) \|_1
> \sup_{\rho\in{\cal P}(\mathbb{C}^2)} \|\Delta_{\cal N}(\rho)\|_1.
\ee

Since we will always deal with optimization on the set of pure states,
be it ${\cal P}(\mathbb{C}^2\otimes \mathbb{C}^2)$ or ${\cal P}(\mathbb{C}^2)$, 
in the following we will use interchangeably $|\rho\rangle\langle\rho|$ and $\rho$.

\bigskip

{\ 
In the following we shall consider the discrimination between two channels ${\cal N}_{\omega_1}$ and 
${\cal N}_{\omega_2}$ belonging to the set $\mathfrak{N}$.
This is equivalent to the discrimination between generic channels with associated matrices $\boldsymbol{M}_1$ and $\boldsymbol{M}_2$ that can be simultaneously diagonalized (or in other words two channels that have the same pre- and post-processing unitaries).
}


\section{Discriminating between extremal qubit channels}\label{sec:extreme}

In this Section we start considering the discrimination between two quasi-extremal qubit channels and then move on to the discrimination between two extremal qubit channels.

\begin{theorem}\label{theo1}
In the discrimination between two qubit channels that are quasi-extreme point maps of $\mathfrak{N}$, 
side entanglement is not useful.
\end{theorem}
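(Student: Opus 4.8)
The plan is to show that the supremum of $\|\Delta_{\mathrm{id}\otimes\cN}(\rho)\|_1$ over entangled pure inputs never exceeds its value over product inputs, so that the strict inequality in \eqref{ineqnorm1} cannot hold. A pure product input $\rho_A\otimes\rho_B$ already gives $\Delta_{\mathrm{id}\otimes\cN}(\rho_A\otimes\rho_B)=\rho_A\otimes\Delta_{\cN}(\rho_B)$, hence $\|\Delta_{\mathrm{id}\otimes\cN}\|_1=\|\Delta_{\cN}(\rho_B)\|_1$, so the product supremum equals the single-system one and the bound $\sup_{\mathcal P(\mathbb C^2\otimes\mathbb C^2)}\ge\sup_{\mathcal P(\mathbb C^2)}$ is automatic. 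The entire content is therefore the reverse inequality.

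First I would fix the single-system optimum. Since quasi-extreme maps lie on the edges of the tetrahedron they are unital ($\boldsymbol t=\boldsymbol 0$), and under the standing simultaneous-diagonalization assumption $\boldsymbol N_1,\boldsymbol N_2$ are diagonal in the same frame. Thus $\Delta_{\cN}(\rho)=\tfrac12\sum_{k=1}^3\delta_k\,r_k\,\sigma_k$ with $\delta_k=\lambda_k^{(1)}-\lambda_k^{(2)}$, and the identity $\|\tfrac12\,\boldsymbol a\cdot\boldsymbol\sigma\|_1=\|\boldsymbol a\|$ gives $\sup_{\mathcal P(\mathbb C^2)}\|\Delta_{\cN}(\rho)\|_1=\max_k|\delta_k|$, attained by the Bloch vector along the maximizing axis.

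Next comes the structural input from \eqref{tM}: a quasi-extreme diagonal map has one eigenvalue of unit modulus together with $\lambda_3=\lambda_1\lambda_2$, so it is a convex mixture of the identity and a single Pauli conjugation $\mathcal U_k(\bullet)=\sigma_k\bullet\sigma_k$ (edges of type $\mathrm{diag}(1,c,c)$) or of two such conjugations (edges of type $\mathrm{diag}(-s,-1,s)$). Consequently $\Delta_{\cN}$ is a signed combination of Pauli conjugations with at least one component absent. When both channels sit on edges of the same type, $\Delta_{\cN}$ collapses to a multiple of $\mathcal U_i-\mathrm{id}$ or of $\mathcal U_i-\mathcal U_j$, so $\mathrm{id}\otimes\Delta_{\cN}$ sends any pure $|\psi\rangle$ to a multiple of a difference of two rank-one projectors $|\psi\rangle\langle\psi|-|\psi'\rangle\langle\psi'|$ with $|\psi'\rangle=(I\otimes\sigma)|\psi\rangle$, whose trace norm is $2\sqrt{1-|\langle\psi|\psi'\rangle|^2}\le 2$. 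This yields $\|\Delta_{\mathrm{id}\otimes\cN}(\rho)\|_1\le\max_k|\delta_k|$ at once, with equality already reachable by product states, so entanglement is useless in these cases.

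The main obstacle is the mixed case, in which the two channels lie on edges of different type and $\Delta_{\cN}$ retains three components ($\mathrm{id},\mathcal U_i,\mathcal U_j$); the output is then genuinely rank three and the rank-two shortcut fails. Here I would write the pure input in Schmidt form and optimize over the Schmidt weight and over the orientation of the Schmidt basis on the channel side, using the missing Pauli component to split the output operator into a $2\times2$ block and a rank-one block in complementary two-dimensional subspaces. Computing the four eigenvalues and maximizing, the quasi-extreme constraints ($\lambda_3=\lambda_1\lambda_2$ and the unit eigenvalue) force the maximum to equal $\max_k|\delta_k|$; the delicate point, where those constraints are essential, is ruling out that some intermediate Schmidt weight or tilted basis overshoots the product value. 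Combining the three cases gives equality in \eqref{ineqnorm1}, i.e.\ side entanglement offers no advantage.
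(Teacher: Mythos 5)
Your handling of the same-type pairs is complete and in fact more rigorous than the paper's own argument: writing $\Delta_{\cal N}$ as a multiple of $\mathrm{id}-\mathcal{U}_i$ or $\mathcal{U}_i-\mathcal{U}_j$ and using $\bigl\| \ket{\Psi}\bra{\Psi}-(I\otimes\sigma)\ket{\Psi}\bra{\Psi}(I\otimes\sigma)\bigr\|_1=2\sqrt{1-|\braket{\Psi|I\otimes\sigma|\Psi}|^2}\le 2$ cleanly replaces the paper's restriction of $\Delta_{\mathrm{id}\otimes{\cal N}}(\Psi)$ to the span of $\{\ket{+y}\ket{+y},\ket{-y}\ket{-y}\}$ after a Schmidt rotation, a step the paper justifies only loosely. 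The genuine gap is your mixed case: one channel on the $\mathrm{id}$--$\mathcal{U}_x$ edge against one on the $\mathcal{U}_z$--$\mathcal{U}_x$ edge, i.e.\ $\cos\theta_1=\cos\phi_1$ while $\cos\theta_2=-\cos\phi_2$ in \eqref{tM}. There the sentence ``the quasi-extreme constraints force the maximum to equal $\max_k|\delta_k|$'' is precisely the statement to be proven, and you concede yourself that excluding an overshoot at intermediate Schmidt weight or tilted Schmidt basis is unresolved. The difficulty lives exactly in the tilted basis: your split into a $2\times 2$ block plus a rank-one block holds only when the Schmidt basis on the channel side is the computational one, and you cannot rotate it there for free, since such a rotation changes the channel pair. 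So as written, the proposal proves the theorem only for same-type pairs. (For what it is worth, the paper's proof also treats the cases $\cos\theta=\cos\phi$ and $\cos\theta=-\cos\phi$ separately and is silent on the mixed pair, so your decomposition usefully makes explicit a case the paper glosses over.)

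The gap is closable, but it needs the ingredient you skipped, namely the reduction of the entangled optimization to a one-parameter problem, which is the content of the paper's Lemma \ref{L3} (quasi-extreme maps lie in the closure of the extreme points, so Lemmas \ref{L1} and \ref{L3} apply verbatim to the Kraus form \eqref{kmatrix}). For the mixed pair one computes $\beta=\alpha=\cos^2\theta_1-\cos^2\theta_2$, $\gamma_1=\cos^2\theta_1+\cos^2\theta_2$, $\gamma_2=-\alpha$, hence $|\gamma_2|\le|\gamma_1|$ and $\alpha^2\le|\gamma_1||\gamma_2|$; both $f$ and $g$ of Observation \ref{observation} are then maximized at $s=1/2$ with common value $|\gamma_1|+|\alpha|$ (the $\gamma_2$-term of $f$ is identically $|\alpha|$ since $(1-2s)^2+4s(1-s)=1$), which is the ``not useful'' situation A.4) of Theorem \ref{thm:sidentangelmentnessery}. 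With that lemma-level input your outline closes; without it, the mixed case — and hence the theorem — remains unproven.
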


\begin{proof}
According to the parametrization \eqref{tM} for quasi-extreme point maps
we must have {\  $\sin\theta=\sin\phi$. This can happen with $\cos\theta=\cos\phi$ or 
with $\cos\theta=-\cos\phi$. In the first case,}
by \eqref{kmatrix}, states along the $x$ direction of $\mathbb{R}^3$ are left invariant.  It follows that the optimal input state for channel discrimination 
(namely, the one that is mostly affected by  ${\cal N}_\omega$)
lies in the $y-z$ plane. Hence, it can be written as
$\ket{\psi}=\sqrt{r_0}\ket{+y}+\sqrt{r_1}\ket{-y}$,
where
$\ket{\pm y}$ denote the eigenvectors of Pauli operator $Y$.
As a consequence we have to optimize over $r_0,r_1$, such that $r_0+r_1=1$.

Now consider a two-qubit entangled state $\ket{\Psi}$. By the Schmidt decomposition
$\ket{\Psi}=\sum_{j=\pm}\sqrt{t_j}\ket{\tau_j}\ket{\tau_j}$, where $\{|\tau_j\rangle\}_j$ are orthonormal and 
$t_++t_-=1$. However, by local unitaries that do not affect the entanglement of $\ket{\Psi}$ we can rewrite it as $\ket{\Psi}=\sum_{j=\pm}\sqrt{\lambda_j}\ket{j y}\ket{j y}$. This means that we can express
$\Delta_{{\rm id}\otimes {\cal N}}(\Psi)$ as a
$2\times 2$ matrix restricting on the subspace spanned by $\{\ket{+y}\ket{+y}, \ket{-y}\ket{-y}\}$ and optimize over $\lambda_+,\lambda_-$ satisfying $\lambda_++\lambda_-=1$. However, this exactly coincides with the optimization over  $r_0,r_1$ of the state $\ket{\psi}$ without entanglement.

{\  In case $\cos\theta=-\cos\phi$, instead, by \eqref{kmatrix} we can realize that the states $\ket{\pm y}$ are flipped one into another. Hence they will be the most affected by the channel's action. Then, repeating the above reasoning for the entangled state $\ket{\Psi}$ we end up with the fact that $\|\Delta_{{\rm id}\otimes {\cal N}}(\Psi)\|_1\leq 
\| \Delta_{\cal N}(|\pm y\rangle\langle\pm y|)\|_1$.
}
\end{proof}

\bigskip

\begin{remark}
The channels of Theorem \ref{theo1} are on the edges of the tetrahedron and correspond to ellipsoids that have exactly two points in common with the boundary of the Bloch sphere.
Examples of this kind of channels are those having two Kraus operators,
one proportional to the identity and one proportional to one Pauli operator (bit-flip, phase-flip -- see also \cite{RMM}--, bit-phase-flip). Also those having two Kraus operators proportional to two distinct Pauli operators.
\end{remark}

\bigskip

Let us now address the discrimination of qubit channels that are extreme point in $\mathfrak{N}$.
We denote any such a channel as $\cN_{\phi,\theta}$, since it is characterized by Kraus operators \eqref{kmatrix}
with parameters $\phi,\theta$.

Consider the following states:
 \bea
 \ket{\psi}&=&\sum_{i=0}^1 a_i\ket i, \label{qb}\\
 \ket{\Psi}&=&\displaystyle\sum_{i, j=0}^{1} a_{ij}\ket{i j},
 \label{2qb}
 \eea
 with $a_i,a_{ij}\in\mathbb{C}$ such that $\sum_{i=0}^{1} \vert a_{i}\vert^2=1$ and $\sum_{i, j=0}^{1} \vert a_{ij}\vert^2=1$ respectively.

We want to study 
$\|\Delta_{{\cal N}_{\ {\phi, \theta}}}(\Psi)(\psi)\|_1$ vs
$\|\Delta_{\operatorname{id}\otimes{\cal N}_{\phi, \theta}}(\Psi)\|_1$ for the channel $\cN_{\phi,\theta}$.
Define
\begin{align}
\alpha &:=\cos^2\theta_1 -\cos^2\theta_2,
\label{re:18}\\
\beta &:=\cos^2\phi_1-\cos^2\phi_2,
\label{re:19}\\
\gamma_1 &:=\cos\phi_1\cos\theta_1-\cos\phi_2\cos\theta_2,
\label{re:20}\\
\gamma_2 &:=\sin\phi_1\sin\theta_1-\sin\phi_2\sin\theta_2,
\label{gamma2}
\end{align}
{together with 
\begin{equation}\label{gmM}
\gamma_m:=\left\{ 
\begin{matrix}
\gamma_1,&&|\gamma_1|\leq |\gamma_2|\\
\gamma_2,&&{\ |\gamma_1|> |\gamma_2|}
\end{matrix}\right.,
\qquad
\gamma_M:=\left\{ 
\begin{matrix}
\gamma_1,&&|\gamma_1|\geq |\gamma_2|\\
\gamma_2,&&{\ |\gamma_1|< |\gamma_2|}
\end{matrix}\right.,
\end{equation}}
and 
\be\label{Pdef}
P:=\left\{ 
\begin{matrix}
\alpha,&&|\alpha|\geq |\beta|\\
\beta,&&|\alpha|< |\beta|
\end{matrix}\right..
\ee


\begin{lemma}\label{L1}
For a qubit channel $\cN_{\phi,\theta}$ that is an extreme point in $\mathfrak{N}$, it is
\begin{align}\label{eqL1}
\max_{\psi\in{\cal P}(\mathbb{C}^2)}\|\Delta_{{{\cal N}_{\ {\phi, \theta}}}}(\psi)\|_1
&=\max_{0\leq s \leq 1} 2\sqrt{((1-s)\alpha +s\beta)^2+4s(1-s)\left[\left(\frac{\vert \gamma_1\vert
+\vert\gamma_2 \vert}{2}\right)^2-\alpha\beta\right]},\notag\\ \\
&=
  \left\{
  \begin{array}{lll}
    \frac {(|\gamma_1|+|\gamma_2|)\sqrt{(|\gamma_1|+|\gamma_2|)^2-4\alpha\beta}}{\sqrt{(|\gamma_1|
    +|\gamma_2|)^2-(\alpha+\beta)^2}}, & & \text{if}\;\; 
    \vert \alpha+\beta\vert < \vert \gamma_1\vert+\vert\gamma_2\vert \\ \\
  {2 |P|}  , & & \text{if}\;\; \vert \alpha+\beta\vert\geq \vert \gamma_1\vert+\vert\gamma_2\vert 
  \end{array}.
\right.\notag
\end{align}
\end{lemma}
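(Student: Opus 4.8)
The plan is to turn the two suprema into an explicit constrained optimization over a single real parameter. First I would take a pure input $\ket{\psi}=a_0\ket{0}+a_1\ket{1}$ and evaluate $\mathcal N_{\phi_1,\theta_1}(\psi)$ and $\mathcal N_{\phi_2,\theta_2}(\psi)$ directly from the Kraus operators \eqref{kmatrix}. Subtracting, $\Delta_{\mathcal N}(\psi)$ is a $2\times2$ Hermitian traceless matrix: using $\sin^2\phi_i=1-\cos^2\phi_i$, its diagonal is $\pm(\alpha\lvert a_0\rvert^2-\beta\lvert a_1\rvert^2)$ with $\alpha,\beta$ from \eqref{re:18}--\eqref{re:19}, and its off-diagonal entry is $\gamma_1 a_0\bar a_1+\gamma_2\bar a_0 a_1$ with $\gamma_1,\gamma_2$ from \eqref{re:20}--\eqref{gamma2}. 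For any traceless Hermitian $2\times2$ matrix with diagonal $\pm d$ and off-diagonal $c$ the eigenvalues are $\pm\sqrt{d^2+\lvert c\rvert^2}$, so $\|\Delta_{\mathcal N}(\psi)\|_1=2\sqrt{d^2+\lvert c\rvert^2}$.

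Next I would set $\lvert a_0\rvert^2=1-s$, $\lvert a_1\rvert^2=s$ with $s\in[0,1]$, and let $\chi$ be the relative phase of $a_0,a_1$. Then $d=(1-s)\alpha-s\beta$ is independent of $\chi$, while $\lvert c\rvert^2=s(1-s)\bigl[(\gamma_1+\gamma_2)^2\cos^2\chi+(\gamma_1-\gamma_2)^2\sin^2\chi\bigr]$. Maximizing over $\chi$ independently gives $\max_\chi\lvert c\rvert^2=s(1-s)(\lvert\gamma_1\rvert+\lvert\gamma_2\rvert)^2$. The algebraic identity $((1-s)\alpha+s\beta)^2-4s(1-s)\alpha\beta=((1-s)\alpha-s\beta)^2=d^2$ then recasts $d^2+s(1-s)(\lvert\gamma_1\rvert+\lvert\gamma_2\rvert)^2$ into the form displayed in \eqref{eqL1}, which establishes the first equality.

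For the second equality I would write $f(s):=d^2+s(1-s)g$ with $g:=(\lvert\gamma_1\rvert+\lvert\gamma_2\rvert)^2$, a quadratic $As^2+Bs+C$ with $A=(\alpha+\beta)^2-g$, $B=g-2\alpha(\alpha+\beta)$, $C=\alpha^2$, and maximize over $[0,1]$. Its endpoints give $f(0)=\alpha^2$ and $f(1)=\beta^2$, so $2\sqrt{\max(f(0),f(1))}=2\lvert P\rvert$ with $P$ as in \eqref{Pdef}. When $\lvert\alpha+\beta\rvert\ge\lvert\gamma_1\rvert+\lvert\gamma_2\rvert$ one has $A\ge0$, the parabola opens upward, the maximum sits at an endpoint, and the second branch $2\lvert P\rvert$ follows. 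When $\lvert\alpha+\beta\rvert<\lvert\gamma_1\rvert+\lvert\gamma_2\rvert$ the parabola opens downward; computing the vertex value via $4AC-B^2=g(4\alpha\beta-g)$ gives $\tfrac14\,g(g-4\alpha\beta)/(g-(\alpha+\beta)^2)$, whose square root times $2$ is precisely the first branch of \eqref{eqL1}.

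The main obstacle is the last step. The downward-parabola vertex is the maximizer over $[0,1]$ only if the stationary point $s^\ast=-B/(2A)$ is feasible, and the requirements $s^\ast\ge0$, $s^\ast\le1$ read $g\ge 2\alpha(\alpha+\beta)$ and $g\ge 2\beta(\alpha+\beta)$, which are \emph{not} automatic consequences of $g>(\alpha+\beta)^2$ (if $s^\ast$ falls outside $[0,1]$, the constrained maximum reverts to an endpoint, i.e.\ to $2\lvert P\rvert$). I would therefore verify feasibility with care, invoking the relations that the extreme-point parametrization \eqref{tM} imposes among $\alpha,\beta,\gamma_1,\gamma_2$ (equivalently, among the ellipsoid data $\gamma_1+\gamma_2$, $\gamma_1-\gamma_2$ and $\alpha+\beta$, which are the differences of the three semi-axes). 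This is the delicate point on which the clean two-case dichotomy rests, and I would check explicitly whether the extreme-point constraints force $s^\ast\in[0,1]$ throughout the regime $\lvert\alpha+\beta\rvert<\lvert\gamma_1\rvert+\lvert\gamma_2\rvert$ or whether the statement needs an additional feasibility hypothesis.
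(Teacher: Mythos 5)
Your derivation retraces the paper's proof of Lemma \ref{L1} almost verbatim: the same explicit evaluation via the Kraus operators \eqref{kmatrix}, the same traceless $2\times2$ form with diagonal $\pm\left((1-s)\alpha-s\beta\right)$ and off-diagonal $\gamma_1 a_0a_1^*+\gamma_2 a_0^*a_1$, the same phase maximization producing $(|\gamma_1|+|\gamma_2|)^2$, and the same one-parameter quadratic (your $f(s)$ is exactly the square of the objective in \eqref{re:33bis}). Up to the first equality in \eqref{eqL1}, and in the upward-parabola case $|\alpha+\beta|\geq|\gamma_1|+|\gamma_2|$, your argument is correct and identical in substance to the paper's.

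The feasibility issue you isolate in your final paragraph is the real content here, and you are right not to wave it away: the paper's proof simply declares the stationary point \eqref{sol1} to be the maximizer whenever $|\alpha+\beta|<|\gamma_1|+|\gamma_2|$, without checking $|a_1|^2\in[0,1]$, and the extreme-point constraints do \emph{not} force feasibility. Concretely, take $(\theta_1,\phi_1)=(0,\pi/2)$ and $(\theta_2,\phi_2)=(\pi/2,\pi/4)$, both genuine extreme points since $\sin\theta_i\neq\sin\phi_i$. Then $\alpha=1$, $\beta=-\tfrac{1}{2}$, $\gamma_1=0$, $\gamma_2=-\tfrac{1}{\sqrt{2}}$, so $|\alpha+\beta|=\tfrac{1}{2}<\tfrac{1}{\sqrt{2}}=|\gamma_1|+|\gamma_2|$; yet \eqref{sol1} returns $|a_1|^2=-1$, and the first branch of \eqref{eqL1} evaluates to $\sqrt{5}\approx 2.24$, violating the universal bound $\|\Delta_{\cal N}(\psi)\|_1\leq 2$. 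The true maximum is $2$, attained at the endpoint $s=0$: the first channel maps every state to $\ket{0}\bra{0}$ while the second maps $\ket{0}\bra{0}$ to $\ket{1}\bra{1}$, so the input $\ket{0}$ discriminates perfectly. So what you flag is a genuine gap in the paper's own proof, and the second equality of \eqref{eqL1} is false as stated on part of the parameter region.

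The repair is exactly the endpoint analysis you sketch. With $g:=(|\gamma_1|+|\gamma_2|)^2$, the vertex lies in $[0,1]$ iff $g\geq 2\alpha(\alpha+\beta)$ and $g\geq 2\beta(\alpha+\beta)$; since these right-hand sides sum to $2(\alpha+\beta)^2<2g$, at most one can fail, and if, say, $g<2\alpha(\alpha+\beta)$, then combining with $g>(\alpha+\beta)^2$ gives $(\alpha+\beta)(\alpha-\beta)>0$, hence $\alpha^2>\beta^2$, so $P=\alpha$ by \eqref{Pdef} and the constrained maximum is the endpoint value $2|P|$ (symmetrically at $s=1$ when the other inequality fails). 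Thus the corrected dichotomy is: the first branch of \eqref{eqL1} holds only under the additional hypotheses $g\geq 2\alpha(\alpha+\beta)$ and $g\geq 2\beta(\alpha+\beta)$, and otherwise the maximum equals $2|P|$ even in the regime $|\alpha+\beta|<|\gamma_1|+|\gamma_2|$. With this check supplied, your proposal is a complete proof and is in fact more careful than the paper's at precisely the delicate point; note also that the omission propagates into the case analysis of Theorem \ref{thm:sidentangelmentnessery}, which invokes \eqref{sol1}.
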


\begin{proof}
 With the input state \eqref{qb}, taking into account \eqref{kmatrix}, we have
\begin{equation}
  \mathcal{N}_{\phi, \theta} (\psi)=K_0  \vert\psi\rangle\langle \psi \vert K_0^\dagger
  + K_1 \vert\psi\rangle\langle \psi \vert K_1^\dagger
  =\vert \psi'\rangle\langle \psi'\vert+\vert \psi''\rangle\langle \psi''\vert,
\end{equation}
where
\begin{equation}\label{re:14}
  \vert \psi'\rangle=a_0 \cos\theta\vert 0\rangle+a_1\cos\phi\vert 1\rangle, 
  \quad \vert \psi''\rangle=a_1 \sin\phi\vert 0\rangle+a_0\sin\theta\vert 1\rangle.
\end{equation}
Then, according to Eq.\eqref{dif1}, we will have:
\bea
\Delta_{{{\cal N}_{\ {\phi, \theta}}}}(\psi)
=\Delta_{{\psi'}}+\Delta_{{\psi''}},
\eea
with the quantities defined, in the canonical basis, as:
\be\label{delpsi}
\Delta_{{\psi'}}:=
\begin{pmatrix}
\vert a_0\vert^2\alpha & a_0a_1^*\gamma_1\\  a_0^*a_1\gamma_1&\vert a_1\vert^2 \beta
\end{pmatrix},
\quad\quad
\Delta_{{\psi''}}=
\begin{pmatrix}
-\vert a_1\vert^2 \beta & a_0^*a_1 \gamma_2 \\ a_0a_1^* \gamma_2 & -\vert a_0\vert^2 \alpha
\end{pmatrix}.
\ee
It follows that 
\bea
 \|\Delta_{{{\cal N}_{\ {\phi, \theta}}}}(\psi) \|_1
 &=& 2 \sqrt{(|a_0|^2\alpha-|a_1|^2\beta)^2
 +\vert a_0a_1^*\gamma_1+a_0^*a_1\gamma_2\vert^2}.
\eea
Without loss of generality, we assume $a_0\in\mathbb{R}$ and $a_1=|a_1| e^{i\eta}\in\mathbb{C}$. Taking into account the normalization condition, and the fact that $\vert \gamma_1+e^{2i\eta}\gamma_2 \vert^2$
reaches the maximum $\left( |\gamma_1|+|\gamma_2|\right)^2$,
we are left with finding 
\begin{equation}\label{re:33bis}
\max_{0\leq |a_1|\leq 1} 2\sqrt{(\alpha-(\alpha+\beta)|a_1|^2)^2+|a_1|^2(1-|a_1|^2)(\vert \gamma_1\vert
+\vert\gamma_2 \vert)^2},
\end{equation}
giving Eq.\eqref{eqL1} for
\begin{equation}\label{sol1}
  \left\{
  \begin{array}{lll}
    \vert a_1\vert^2=\frac{2\alpha(\alpha+\beta)-(|\gamma_1|+|\gamma_2|)^2}{2(\alpha+\beta)^2
    -2(|\gamma_1|+|\gamma_2|)^2}, 
    & & \hbox{if}\; \vert \alpha+\beta\vert < \vert \gamma_1\vert+\vert\gamma_2\vert \\
    \vert a_1\vert^2=0, & & \hbox{if}\; \vert \alpha+\beta\vert\geq \vert \gamma_1\vert+\vert\gamma_2\vert 
    \;\hbox{and}\; |\alpha| \geq |\beta|\\
     \vert a_1\vert^2=1, & & \hbox{if}\; \vert \alpha+\beta\vert\geq \vert \gamma_1\vert+\vert\gamma_2\vert 
     \;\hbox{and}\; |\alpha|\leq |\beta|
  \end{array}
\right..
\end{equation}
\end{proof}

\bigskip

\begin{lemma}\label{L3}
For a qubit channel ${\cal N}_{\phi, \theta}$ that is an extreme point in $\mathfrak{N}$,
in finding $\max_{{\Psi\in{\cal P} (\mathbb{C}^2\otimes\mathbb{C}^2)}}
  \Vert \Delta_{\operatorname{id} \otimes {{\cal N}_{\ {\phi, \theta}}}} (\Psi)\Vert_1$ 
 it is enough to search in the two dimensional subspace $span\{\vert 00\rangle,\vert 11\rangle\}$ or $span\{\vert 01\rangle,\vert 10\rangle\}$, and it results
\begin{align}
  &\max_{{\Psi\in{\cal P} (\mathbb{C}^2\otimes\mathbb{C}^2)}}
  \Vert \Delta_{\operatorname{id} \otimes {{\cal N}_{\ {\phi, \theta}}}} (\Psi)\Vert_1\notag\\
   &= \max_{0\leq s\leq 1} \frac{1}{2} \sum_{i, j=1}^{2}
   \Bigg\vert (1-s) \alpha + s \beta 
   +(-1)^i\sqrt{\left((1-s) \alpha + s \beta\right)^2+4s(1-s)(\gamma_j^2-\alpha\beta)}\Bigg\vert.
   \label{eq35}
\end{align}
\end{lemma}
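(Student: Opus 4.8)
The plan is to compute $\Delta_{\operatorname{id}\otimes\mathcal{N}_{\phi,\theta}}(\Psi)$ explicitly as a $4\times 4$ Hermitian matrix for a general pure input \eqref{2qb}, and to split the argument into two parts: a \emph{reduction} showing the supremum is attained on a definite-parity two-qubit subspace, followed by an explicit $2\times 2$ computation on that subspace reproducing \eqref{eq35}. Writing $|\Psi'_c\rangle=(I\otimes K_0^{(c)})|\Psi\rangle$ and $|\Psi''_c\rangle=(I\otimes K_1^{(c)})|\Psi\rangle$ for the two channels (labelled $c=1,2$, with parameters $\phi_c,\theta_c$), one has $\Delta=P-Q$ with $P=|\Psi'_1\rangle\langle\Psi'_1|+|\Psi''_1\rangle\langle\Psi''_1|$ and $Q=|\Psi'_2\rangle\langle\Psi'_2|+|\Psi''_2\rangle\langle\Psi''_2|$ positive of rank at most two; trace preservation gives $\mathrm{tr}\,P=\mathrm{tr}\,Q=1$, so $\mathrm{tr}\,\Delta=0$.

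For the reduction I would exploit the special form \eqref{kmatrix}: $K_0$ is diagonal (it preserves the computational label of the channel input) while $K_1$ is antidiagonal (it flips that label). Using the Schmidt decomposition together with the freedom to apply an arbitrary unitary on the untouched reference qubit, I can bring $|\Psi\rangle$ to the form $\sqrt{1-s}\,|0\rangle|v_0\rangle+\sqrt{s}\,|1\rangle|v_1\rangle$ without changing $\|\Delta\|_1$, where $\{|v_0\rangle,|v_1\rangle\}$ is an orthonormal basis of the input space. The assertion that it is enough to search in $\mathrm{span}\{|00\rangle,|11\rangle\}$ or $\mathrm{span}\{|01\rangle,|10\rangle\}$ is then equivalent to the claim that the optimal Schmidt basis $\{|v_0\rangle,|v_1\rangle\}$ is the computational one $\{|0\rangle,|1\rangle\}$: if $|v_0\rangle,|v_1\rangle\in\{|0\rangle,|1\rangle\}$ then $K_0$ keeps the parity of $|\Psi'_c\rangle$ fixed while $K_1$ reverses it in $|\Psi''_c\rangle$, so $\Delta$ becomes block-diagonal with respect to the even sector $\mathrm{span}\{|00\rangle,|11\rangle\}$ and the odd sector $\mathrm{span}\{|01\rangle,|10\rangle\}$. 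I expect this reduction to be the main obstacle: one must show that tilting $\{|v_0\rangle,|v_1\rangle\}$ away from the computational basis, which couples the two sectors through off-diagonal blocks given by $(\mathcal{N}_1-\mathcal{N}_2)$ applied to $|v_0\rangle\langle v_1|$, can only decrease the trace norm. A workable route is a variational argument around the aligned configuration, controlled by the fact that $\Delta\preceq P$ and $-\Delta\preceq Q$ force (via Weyl's inequalities, since $P,Q$ have rank at most two) $\Delta$ to have at most two positive and at most two negative eigenvalues, so that $\|\Delta\|_1=2\sum_{\lambda>0}\lambda$ is governed by an effectively $2\times 2$ spectral structure.

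Granting the reduction, the computation is short. On $\mathrm{span}\{|00\rangle,|11\rangle\}$ with $|\Psi\rangle=\sqrt{1-s}\,|00\rangle+\sqrt{s}\,e^{i\delta}|11\rangle$, the vector $|\Psi'_c\rangle$ lies in the even sector and $|\Psi''_c\rangle$ in the odd sector, so $\Delta$ splits into two $2\times 2$ Hermitian blocks. Using \eqref{re:18}--\eqref{gamma2} and $\sin^2=1-\cos^2$, the even block is $\left(\begin{smallmatrix}(1-s)\alpha & \cdot\,\gamma_1\\ \cdot & s\beta\end{smallmatrix}\right)$ and the odd block $\left(\begin{smallmatrix}-(1-s)\alpha & \cdot\,\gamma_2\\ \cdot & -s\beta\end{smallmatrix}\right)$, both with off-diagonal modulus $\sqrt{s(1-s)}\,|\gamma_j|$; up to an irrelevant reordering of the basis these are exactly the matrices $\Delta_{\psi'},\Delta_{\psi''}$ of \eqref{delpsi} with $|a_0|^2\mapsto 1-s$, $|a_1|^2\mapsto s$. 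Since a $2\times 2$ Hermitian matrix has trace norm $2\max\!\big(R,\tfrac12|\mathrm{tr}|\big)$, with $R$ the spectral radius about the mean, each block contributes $\max(|A|,B_j)$, where $A:=(1-s)\alpha+s\beta$ and $B_j:=\sqrt{A^2+4s(1-s)(\gamma_j^2-\alpha\beta)}$; the two blocks sum to $\max(|A|,B_1)+\max(|A|,B_2)$. Finally the identity $|A-B_j|+|A+B_j|=2\max(|A|,B_j)$ rewrites this as $\tfrac12\sum_{i,j=1}^2|A+(-1)^iB_j|$, which after maximizing over $s$ is precisely \eqref{eq35}. The phase $\delta$ cancels throughout, and the companion subspace $\mathrm{span}\{|01\rangle,|10\rangle\}$ yields the same expression after $s\mapsto 1-s$, explaining why either subspace suffices.
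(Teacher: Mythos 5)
Your endgame is fine, but the heart of the lemma is exactly the step you leave unproven. You correctly reduce to states $\sqrt{1-s}\,\vert 0\rangle\vert v_0\rangle+\sqrt{s}\,\vert 1\rangle\vert v_1\rangle$ and you correctly compute the two parity blocks when $\{\vert v_0\rangle,\vert v_1\rangle\}$ is the computational basis (indeed your odd-block diagonal $(-(1-s)\alpha,-s\beta)$ is the consistent one, matching $\Delta_{\psi''}$ in \eqref{delpsi}; the printed $\Delta_{\Phi''}$ in \eqref{delphimatrix} has the amplitudes swapped, an inconsequential typo). But the reduction itself — that tilting $\{\vert v_0\rangle,\vert v_1\rangle\}$ away from $\{\vert 0\rangle,\vert 1\rangle\}$ cannot increase $\Vert\Delta\Vert_1$ — is only sketched, and the sketch would not close it. A variational argument around the aligned configuration can at best certify a local extremum and says nothing about Schmidt bases far from computational; and the Weyl-type observation that $\Delta\preceq P$, $-\Delta\preceq Q$ with $\mathrm{rank}\,P,\mathrm{rank}\,Q\leq 2$ forces at most two positive and two negative eigenvalues holds for \emph{every} input state, aligned or not, so it cannot single out the aligned configuration or control how the off-diagonal blocks coupling the two parity sectors affect the trace norm. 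As it stands, your argument establishes the right-hand side of \eqref{eq35} only as a lower bound (the value attained on $\mathrm{span}\{\vert 00\rangle,\vert 11\rangle\}$), not as the maximum.

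The paper closes this gap with an upper-bound/achievability argument that needs no variational analysis, and which you could adopt essentially verbatim. First apply the triangle inequality to split $\Delta_{\operatorname{id}\otimes\mathcal{N}}(\Psi)$ into the $K_0$-difference and $K_1$-difference terms, $\Vert\Delta\Vert_1\leq\Vert C\Vert_1+\Vert D\Vert_1$ as in \eqref{re:26}. Each of $C,D$ is a difference of two rank-one positive operators, so its two nonzero eigenvalues are computable in closed form, Eqs.~\eqref{re:41}--\eqref{re:42}; the decisive structural fact (your own observation that $K_0$ is diagonal and $K_1$ antidiagonal) makes these eigenvalues depend on the general amplitudes $a_{ij}$ of \eqref{2qb} \emph{only} through the marginal weight $s=\vert a_{01}\vert^2+\vert a_{11}\vert^2$ — all dependence on the Schmidt basis, coherences and phases cancels. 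Hence $\max_{\Psi}\Vert\Delta\Vert_1\leq\max_{0\leq s\leq 1}h(s)$, where $h(s)$ is precisely the expression maximized in \eqref{eq35}. Then, for each $s$, the state $\sqrt{1-s}\,\vert 00\rangle+\sqrt{s}\,\vert 11\rangle$ attains $h(s)$: there $\vert\Phi'\rangle$ and $\vert\Phi''\rangle$ lie in orthogonal parity sectors, so $C$ and $D$ have orthogonal supports and the triangle inequality is saturated, exactly your block computation. This sandwich upgrades your lower bound to the claimed global equality, and it is where the "enough to search in $\mathrm{span}\{\vert 00\rangle,\vert 11\rangle\}$ or $\mathrm{span}\{\vert 01\rangle,\vert 10\rangle\}$" assertion actually comes from.
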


\begin{proof}
First, we find an upper bound for
$\Vert \Delta_{\operatorname{id}\otimes{{\cal N}_{\ {\phi, \theta}}}} (\Psi )\Vert_1$, 
where $\ket \Psi$ is given in Eq.\eqref{2qb}:
\begin{align}
  \Vert \Delta_{\operatorname{id}\otimes {{\cal N}_{\ {\phi, \theta}}}} (\Psi)\Vert_1&=\left\Vert \operatorname{id}\otimes \left(K_0^{(1)} K_0^{(1)^\dagger}-K_0^{(2)} K_0^{(2)\dagger}\right) \Psi
  +\operatorname{id}
  \otimes \left(K_1^{(1)} K_1^{(1)\dagger}-K_1^{(2)} K_1^{(2)\dagger}\right)\Psi\right\Vert_1\nonumber\\
   &\leq \left\Vert \operatorname{id}\otimes \left(K_0^{(1)} K_0^{(1)\dagger}-K_0^{(2)}K_0^{(2)\dagger}\right) \Psi\right\Vert_1+\left\Vert\operatorname{id}\otimes \left( K_1^{(1)} K_1^{(1)\dagger}-K_1^{(2)} K_1^{(2)\dagger}\right)\Psi\right\Vert_1\nonumber\\
   &= \Vert C\Vert_1+ \Vert D\Vert_1\label{re:26},
\end{align}
where the superscripts refer to the two channels and the
matrices $C$ and $D$ are defined as
\begin{align}
C&:=\left(
  \begin{array}{cccc}
    \alpha_1 \vert a_{00}\vert^2 & \gamma_1a_{00}a_{01}^* & \alpha_1a_{00}a_{01}^* & \gamma_1a_{00}a_{11}^* \\
    \gamma_1 a_{00}^*a_{01} & \beta_1\vert a_{01}\vert^2 & \gamma_1a_{01}a_{10}^* & \beta_1a_{01}a_{11}^* \\
    \alpha_1a_{00}^*a_{10} & \gamma_1a_{01}^*a_{10} & \alpha_1\vert a_{10}\vert^2 & \gamma_1a_{10}a_{11}^* \\
    \gamma_1a_{00}^*a_{11} & \beta_1a_{01}^*a_{11} & \gamma_1a_{10}^*a_{11} & \beta_1\vert a_{11}\vert^2 \\
  \end{array}
\right),\\ \notag \\
D&:=\left(
  \begin{array}{cccc}
    \alpha_2 \vert a_{01}\vert^2 & \gamma_2a_{00}^*a_{01} & \alpha_2a_{01}a_{11}^* & \gamma_2a_{01}a_{10}^* \\
    \gamma_2 a_{00}a_{01}^* & \beta_2\vert a_{00}\vert^2 & \gamma_2a_{00}a_{11}^* & \alpha_2a_{00}a_{10}^* \\
    \alpha_2a_{01}^*a_{11} & \gamma_2a_{00}^*a_{11} & \alpha_2\vert a_{11}\vert^2 & \gamma_2a_{10}^*a_{11} \\
    \gamma_2a_{01}^*a_{10} & \alpha_2a_{00}^*a_{10} & \gamma_2a_{10}a_{11}^* & \beta_2\vert a_{10}\vert^2 \\
  \end{array}
\right).
\end{align}
The non-zero eigenvalues of $C$ and $D$ are respectively:
\begin{multline}\label{re:41}
\frac{1}{2}\left(\alpha(|a_{00}|^2+|a_{10}|^2)+\beta(|a_{11}|^2+|a_{01}|^2)\right)\\
  \pm\frac{1}{2}\sqrt{\left((|a_{00}|^2+|a_{10}|^2)\alpha-(|a_{01}|^2+|a_{11}|^2)\beta\right)^2
  +4\gamma_1^2(\vert a_{00}\vert^2+\vert a_{10}\vert^2)(\vert a_{01}\vert^2+\vert a_{11}\vert^2)},
\end{multline}
and
\begin{multline}\label{re:42}
 \frac{1}{2}\left(-\beta(|a_{11}|^2+|a_{01}|^2)-\alpha(|a_{00}|^2+|a_{10}|^2)\right)\\
  \pm\frac{1}{2}\sqrt{\left(-(|a_{00}|^2+|a_{10}|^2)\beta+(|a_{01}|^2+|a_{11}|^2)\alpha\right)^2
  +4\gamma_2^2(\vert a_{00}\vert^2+\vert a_{10}\vert^2)(\vert a_{01}\vert^2+\vert a_{11}\vert^2)}.
\end{multline}
Expressions \eqref{re:41} and \eqref{re:42} imply that, in order to optimize the upper bound \eqref{re:26}, it is enough to consider states in one of the following two dimensional subspaces: $span\{\vert 00\rangle,\vert 01\rangle\}$, $span\{\vert 00\rangle,\vert 11\rangle\}$, $span\{\vert 10\rangle,\vert 01\rangle\}$,
$span\{\vert 10\rangle,\vert 11\rangle\}$.

\bigskip

Next, we show that the subspace we have to consider for the optimization must be 
either $span\{\vert 00\rangle,\vert 11\rangle\}$, or $span\{\vert 01\rangle,\vert 10\rangle\}$. This is done by showing that in such cases the upper bound \eqref{re:26} is actually achievable.\\
Let us take $\ket {\Phi}=a_{0}\ket{00}+a_{1}\ket{11}$, with $|a_{0}|^2+|a_{1}|^2=1$,
then we can find that
\begin{equation}\label{eq:idNPhi}
 {\rm id}\otimes \mathcal{N}_{\phi,\theta}(\Phi)=K_0 \ket{\Phi}\bra {\Phi}K_0^\dagger+ K_1 \ket{\Phi}\bra {\Phi} K_1^\dagger=\ket{\Phi'}\bra{\Phi'}+\ket{\Phi''}\bra{\Phi''},
\end{equation}
where
\begin{equation}\label{re:12}
  \vert \Phi'\rangle:=a_{0} \cos\theta\ket{00}+a_{1}\cos\phi\ket{11}, \quad 
  \vert \Phi''\rangle:=a_{0} \sin\theta\ket{01}+a_{1}\sin\phi\ket{10}.
\end{equation}
It is clear that  $\vert \Phi'\rangle \perp \vert \Phi''\rangle$. Therefore, it results
\begin{equation}\label{re:orthogonal}
  \Vert {\rm id}\otimes \mathcal{N}(\Phi)\Vert_1
  =\Vert \vert \Phi'\rangle\langle \Phi'\vert\Vert_1+\Vert\vert \Phi''\rangle\langle \Phi''\vert\Vert_1.
\end{equation}
As a consequence we have that $\Phi$ allows us to saturate the bound 
\eqref{re:26}, namely
\be\label{delphi}
  \Vert \Delta_{\operatorname{id}\otimes{{\cal N}_{\ {\phi, \theta}}}} (\Phi)\Vert_1
  = \| \Delta_{{\Phi'}}\|_1+\| \Delta_{{\Phi''}}\|_1.
\ee
The same result can be found starting from $|\Phi\rangle=a_0 |00\rangle+a_1|11\rangle$.

In the canonical basis the quantities at the l.h.s. of Eq.\eqref{delphi} read:
\be\label{delphimatrix}
\Delta_{{\Phi'}}:=
\begin{pmatrix}
\vert a_{0}\vert^2\alpha  &0  & 0   & a_0a_1^*\gamma_1\\
 0  &  0  & 0   & 0  \\
  0  &  0  & 0   & 0  \\
a_{0}^*a_{1}\gamma_1&  0 &    0  & \vert a_1\vert^2 \beta
\end{pmatrix},
\quad\quad
\Delta_{{\Phi''}}:=
\begin{pmatrix}
 0  &  0  & 0   & 0  \\
0 &  -\vert a_1\vert^2 \alpha & a_0a_1^* \gamma_2 &0 \\
0 & a_0^*a_1 \gamma_2 & -\vert a_0\vert^2 \beta &0 \\
 0  &  0  & 0   & 0
\end{pmatrix}.
\ee
Comparing Eq.\eqref{delphimatrix} with Eq.\eqref{delpsi} we deduce that
\bea\label{rel}
 \Vert \Delta_{\operatorname{id}\otimes {{\cal N}_{\ {\phi, \theta}}}} (\Phi)\Vert_1&=&
   \| \Delta_{{\Phi'}}\|_1+\| \Delta_{{\Phi''}}\|_1\\
   &=& \| \Delta_{\psi'}\|_1+\| \Delta_{{\psi''}}\|_1\notag\geq\|\Delta_{{{{\cal N}_{\ {\phi, \theta}}}}}(\psi)\|_1.
 \eea
Finally, from Eqs.\eqref{delphi} and \eqref{delphimatrix}, we derive
\begin{align}
  &\max_{{\Psi\in{\cal P} (\mathbb{C}^2\otimes\mathbb{C}^2)}}
  \Vert \Delta_{\operatorname{id} \otimes {{\cal N}_{\ {\phi, \theta}}}} (\Psi)\Vert_1
  =\max_{\Phi\in {\cal P} (\mathbb{C}^2) }\Vert  \Delta_{\operatorname{id}\otimes{{\cal N}_{\ {\phi, \theta}}}} (\Phi) \Vert_1\notag\\
   &= \max_{0\leq s\leq 1} \frac{1}{2}  \sum_{i, j=1}^{2}
   \left\vert (1-s) \alpha+ s \beta +(-1)^i\sqrt{\left((1-s) \alpha + s \beta\right)^2+4s(1-s)(\gamma_j^2-\alpha\beta)}\right\vert.
\end{align}

\end{proof}


\begin{oss}\label{observation}

First notice that, being $(\gamma_j^2-\alpha\beta)\geq 0$ for $j=1,2$, Eq.\eqref{eq35} reduces to 
\begin{align}\label{eqL2}
\max_{\Psi\in{\cal P}(\mathbb{C}^2\otimes\mathbb{C}^2)}\Vert \Delta_{\operatorname{id} 
\otimes {{\cal N}_{{\phi, \theta}}}} (\Psi)\Vert_1
=\max_{0\leq s \leq 1}&\left\{\sqrt{\left((1-s) \alpha + s \beta\right)^2+4s (1-s) (\gamma_1^2-\alpha\beta)}\right.\notag\\
&+\left.\sqrt{\left((1-s) \alpha + s \beta\right)^2+4s (1-s) (\gamma_2^2-\alpha\beta)}\right\}.
\end{align}
Then, observe that by means of Eqs.\eqref{eqL2} and \eqref{eqL1},  we can write
\begin{align}
  \max_{\Psi\in{\cal P} (\mathbb{C}^2\otimes\mathbb{C}^2)}
  \Vert \Delta_{\operatorname{id} \otimes {{\cal N}_{{\phi, \theta}}}} (\Psi)\Vert_1
  -\max_{\psi\in{\cal P} (\mathbb{C}^2)}\Vert \Delta_{{{\cal N}_{{\phi, \theta}}}} (\psi )\Vert_1
  &= \max_{0\leq s\leq 1}f\left(s\right)-\max_{0\leq s\leq 1}g\left(s\right),
  \label{eq:diff}
\end{align}
where
\begin{align}
f\left(s\right)&:=
\sum_{i=1}^{2}\sqrt{\left((1-s) \alpha - s \beta\right)^2+4s (1-s) \gamma_i^2} \label{fdef},\\
g\left(s\right)&:= 2\sqrt{((1-s)\alpha-s\beta)^2+4s(1-s) \left(\frac{\vert \gamma_1\vert +\vert\gamma_2\vert}{2}\right)^2}.
\label{gdef}
\end{align}
It follows that:
\begin{itemize}
\item When $|\gamma_1|=|\gamma_2|$, it is $f\left(s\right)=g(s)$, implying that side entanglement is not useful.

\item When $|\gamma_1|\neq|\gamma_2|$, suppose that $\max_{0\leq s\leq 1} g\left(s\right)= g(s_*)$, then, using relation \eqref{rel}, we will have
\be
\max_{0\leq s\leq 1}f\left(s\right)\geq f\left(s_*\right)\geq g(s_*).
\ee
In other words
\begin{itemize}
\item If  $ f\left( s_*\right) \neq g(s_*) \Rightarrow \max_{\Psi}\Vert \Delta_{\operatorname{id} \otimes {{\cal N}_{{\phi, \theta}}}} (\Psi)\Vert_1>\max_{\psi}\Vert \Delta_{{{\cal N}_{{\phi, \theta}}}} (\psi )\Vert_1 $
$\Rightarrow $ side entanglement is useful.

 \item If  $ f\left( s_*\right) = g(s_*)$ and $s_* \neq {\rm argmax} \, f\left(s\right) \Rightarrow $ side entanglement is useful.

\item
 If  $ f\left( s_*\right) = g(s_*)$ and $s_*={\rm argmax} \, f\left(s\right) \Rightarrow $ side entanglement is not useful.

\end{itemize}
\end{itemize}
\end{oss}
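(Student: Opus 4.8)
The plan is to establish the three assertions of the Observation in turn: the collapse of the double sum in Eq.\eqref{eq35} to the two-term expression Eq.\eqref{eqL2}, the identification of the entangled and unentangled optima with $\max_s f(s)$ and $\max_s g(s)$, and finally the trichotomy on the usefulness of side entanglement.

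First I would verify the nonnegativity $\gamma_j^2-\alpha\beta\geq 0$ that the reduction rests on. Writing $\alpha,\beta,\gamma_1,\gamma_2$ out through Eqs.\eqref{re:18}--\eqref{gamma2} in terms of $\cos\theta_i,\cos\phi_i$ (respectively $\sin\theta_i,\sin\phi_i$), a direct expansion should produce the perfect-square identities $\gamma_1^2-\alpha\beta=(\cos\theta_1\cos\phi_2-\cos\theta_2\cos\phi_1)^2$ and $\gamma_2^2-\alpha\beta=(\sin\theta_1\sin\phi_2-\sin\theta_2\sin\phi_1)^2$, both manifestly nonnegative. Abbreviating $A:=(1-s)\alpha+s\beta$ and $B_j:=\sqrt{A^2+4s(1-s)(\gamma_j^2-\alpha\beta)}$, this nonnegativity forces $B_j\geq|A|$ for every admissible $s$, so that for each fixed $j$ the inner pair of absolute values collapses as $|A-B_j|+|A+B_j|=(B_j-A)+(B_j+A)=2B_j$. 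Summing over $j$ and dividing by two then turns Eq.\eqref{eq35} into Eq.\eqref{eqL2}.

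Next, to reach the $f,g$ form I would invoke the elementary identity $\big((1-s)\alpha+s\beta\big)^2-4s(1-s)\alpha\beta=\big((1-s)\alpha-s\beta\big)^2$. Applied inside each $B_j$ it converts the radicand into $\big((1-s)\alpha-s\beta\big)^2+4s(1-s)\gamma_j^2$, identifying Eq.\eqref{eqL2} with $\max_s f(s)$ of Eq.\eqref{fdef}; applied to Eq.\eqref{eqL1} (where $\tfrac{|\gamma_1|+|\gamma_2|}{2}$ already replaces the individual $\gamma_j$) it identifies that optimum with $\max_s g(s)$ of Eq.\eqref{gdef}. This justifies the bookkeeping of Eq.\eqref{eq:diff}. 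For the trichotomy the key step is the pointwise bound $f(s)\geq g(s)$ on $[0,1]$: with $u:=\big((1-s)\alpha-s\beta\big)^2$ and $v:=4s(1-s)\geq0$, the map $h(t):=\sqrt{u+vt^2}$ is convex in $t$ (its second derivative equals $vu\,(u+vt^2)^{-3/2}\geq0$), so Jensen gives $h(|\gamma_1|)+h(|\gamma_2|)\geq 2\,h\!\big(\tfrac{|\gamma_1|+|\gamma_2|}{2}\big)$, i.e. $f(s)\geq g(s)$, with $f\equiv g$ precisely when $|\gamma_1|=|\gamma_2|$. The degenerate case $|\gamma_1|=|\gamma_2|$ then gives equal maxima and no advantage. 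When $|\gamma_1|\neq|\gamma_2|$, setting $g(s_*)=\max_s g$ and chaining $\max_s f\geq f(s_*)\geq g(s_*)=\max_s g$ (the middle inequality being the pointwise bound, consistent with Eq.\eqref{rel}) shows side entanglement is never harmful; strictness of either inequality — that is, $f(s_*)\neq g(s_*)$, or $f(s_*)=g(s_*)$ with $s_*\neq\mathrm{argmax}\,f$ — yields a strict gain, while $f(s_*)=g(s_*)$ together with $s_*=\mathrm{argmax}\,f$ gives equality of the two optima and hence no advantage.

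The main obstacle I anticipate is not any single computation but the uniform control of signs in the collapse of Eq.\eqref{eq35}: one must pin down $B_j\geq|A|$ for \emph{all} admissible $s$ — which is exactly where $\gamma_j^2-\alpha\beta\geq0$ enters — so that the branch of each absolute value is fixed once and for all rather than split into $s$-dependent cases. Once that sign analysis is in place, the remainder is the convexity argument for $f\geq g$ plus routine algebra.
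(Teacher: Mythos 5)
Your proposal is correct, and its overall architecture coincides with the paper's: reduce Eq.~\eqref{eq35} to Eq.~\eqref{eqL2} via $\gamma_j^2-\alpha\beta\geq 0$, rewrite both optima as $\max_s f(s)$ and $\max_s g(s)$ using $\big((1-s)\alpha+s\beta\big)^2-4s(1-s)\alpha\beta=\big((1-s)\alpha-s\beta\big)^2$, and then run the same trichotomy on $s_*={\rm argmax}\,g$. Where you genuinely diverge is in the justification of the key inequality $f(s_*)\geq g(s_*)$: the paper obtains it operationally from relation~\eqref{rel}, i.e.\ the trace-norm triangle inequality $\Vert\Delta_{\psi'}\Vert_1+\Vert\Delta_{\psi''}\Vert_1\geq\Vert\Delta_{\mathcal N}(\psi)\Vert_1$ established in the proof of Lemma~\ref{L3}, whereas you prove the stronger \emph{pointwise} bound $f(s)\geq g(s)$ analytically, by convexity of $t\mapsto\sqrt{u+vt^2}$ and the midpoint Jensen inequality $h(|\gamma_1|)+h(|\gamma_2|)\geq 2h\bigl(\tfrac{|\gamma_1|+|\gamma_2|}{2}\bigr)$, which also delivers the equality case $f\equiv g \Leftrightarrow |\gamma_1|=|\gamma_2|$ for free (the paper only notes $f=g$ when $|\gamma_1|=|\gamma_2|$, not the converse at each $s$). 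You also make explicit two points the paper leaves tacit: the perfect-square identities $\gamma_1^2-\alpha\beta=(\cos\theta_1\cos\phi_2-\cos\theta_2\cos\phi_1)^2$ and $\gamma_2^2-\alpha\beta=(\sin\theta_1\sin\phi_2-\sin\theta_2\sin\phi_1)^2$ underlying the hypothesis $(\gamma_j^2-\alpha\beta)\geq0$ (both check out by direct expansion, of the form $(ac-bd)^2-(a^2-b^2)(c^2-d^2)=(ad-bc)^2$), and the sign analysis $B_j\geq|A|$ that fixes the branches of the absolute values uniformly in $s$ so that $\tfrac12\sum_{i,j}|A+(-1)^iB_j|=B_1+B_2$. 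The paper's route buys a structural interpretation (the bound is exactly the saturable triangle-inequality bound from the entangled-input decomposition, which is what makes the second bullet of the trichotomy operationally meaningful); yours buys self-containedness and a cleaner characterization of when $f$ and $g$ coincide. Both are sound, and your trichotomy logic at the end matches the paper's exactly.
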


With this in mind we can now state the following Theorem.


\begin{theorem}\label{thm:sidentangelmentnessery}
In the discrimination between two qubit channels, 
${\cal N}_{\phi_1, \theta_1}$ and ${\cal N}_{\phi_1, \theta_1}$,
that are extreme point maps of $\mathfrak{N}$,  when $|\gamma_1|=|\gamma_2|$ side entanglement is not useful. 
Instead, when $|\gamma_1|\neq|\gamma_2|$ the following exhausts all possibilities:
\begin{itemize}
\item[A.] Side entanglement is not useful when
\begin{enumerate}
\item[1)]  ${ 2|\gamma_M| }\leq|\alpha+\beta|$;
\item[2)] $|\gamma_1|+|\gamma_2|\leq|\alpha+\beta|<{2|\gamma_M|}$ and
{$P\left(\alpha+\beta\right)\geq\gamma_1^2+\gamma_2^2$;}
\item[3)] $|\alpha+\beta|<|\gamma_1|+|\gamma_2|$ and $\alpha\neq\beta$;
\item[4)] $|\alpha+\beta|<|\gamma_1|+|\gamma_2|$, $\alpha=\beta$ and $\alpha^2 \leq |\gamma_1||\gamma_2|$.
\end{enumerate}
\item[B.] Side entanglement is useful when
\begin{enumerate}
\item[1)] $|\gamma_1|+|\gamma_2|\leq|\alpha+\beta|<2|\gamma_M|$ and 
{$P\left(\alpha+\beta\right)<\gamma_1^2+\gamma_2^2$;}
\item[2)] $|\alpha+\beta|<|\gamma_1|+|\gamma_2|, \alpha=\beta$, 
and $\alpha^2>|\gamma_1||\gamma_2|$.
\end{enumerate}
\end{itemize}
\end{theorem}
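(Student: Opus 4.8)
The plan is to run everything through Observation \ref{observation}, which has already reduced the question to comparing $\max_{s}f(s)$ and $\max_{s}g(s)$ for the functions in \eqref{fdef}--\eqref{gdef}. Writing $h(s):=((1-s)\alpha-s\beta)^2$ and $t_i:=2\sqrt{s(1-s)}\,|\gamma_i|$, one has $f(s)=\sqrt{h(s)+t_1^2}+\sqrt{h(s)+t_2^2}$ and $g(s)=2\sqrt{h(s)+\bigl(\tfrac{t_1+t_2}{2}\bigr)^2}$. Since $x\mapsto\sqrt{h+x^2}$ is convex, Jensen gives $f(s)\ge g(s)$ pointwise, so $\max f\ge\max g$ always; the inequality is \emph{strict} exactly when the convexity is strict, i.e. when $h(s)>0$, $|\gamma_1|\neq|\gamma_2|$, and $s\in(0,1)$. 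Hence $f=g$ only on the ``flat set'' $s\in\{0,1,s_0\}$ with $s_0:=\alpha/(\alpha+\beta)$ (where $h(s_0)=0$), together with the degenerate case $|\gamma_1|=|\gamma_2|$ that immediately yields the first assertion of the theorem. This makes Observation \ref{observation} explicit: \emph{side entanglement is useful iff the maximiser $s_g$ of $g$ is not a point of this flat set, or, lying on it, is not simultaneously a maximiser of $f$.}

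The second step is to locate $s_g=\mathrm{argmax}_{[0,1]}g$. From \eqref{gdef}, $G(s):=g(s)^2/4$ is a quadratic with leading coefficient $(\alpha+\beta)^2-(|\gamma_1|+|\gamma_2|)^2$, so its concavity is governed precisely by the sign of $|\alpha+\beta|-(|\gamma_1|+|\gamma_2|)$, producing the two outer regimes of the statement. When $|\alpha+\beta|\ge|\gamma_1|+|\gamma_2|$ the parabola is convex (or linear) and $s_g$ sits at a boundary, with $\max g=2|P|$ and $P$ as in \eqref{Pdef}; the finer threshold $2|\gamma_M|$ then enters through the individual radicands $F_i(s)=h(s)+4s(1-s)\gamma_i^2$, whose leading coefficient is $(\alpha+\beta)^2-4\gamma_i^2$. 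Thus $|\alpha+\beta|\ge 2|\gamma_M|$ forces both $F_i$ convex and $f$ pinned to the same boundary (case A.1), whereas $|\alpha+\beta|<2|\gamma_M|$ lets the dominant radical bend upward off the boundary, which is what makes the derivative test nontrivial (cases A.2/B.1). I would record once and for all the boundary data $f(0)=2|\alpha|$, $f(1)=2|\beta|$ together with the one-sided derivative $f'(0)=\tfrac{2}{|\alpha|}\bigl(\gamma_1^2+\gamma_2^2-\alpha(\alpha+\beta)\bigr)$ and its mirror at $s=1$ under $\alpha\leftrightarrow\beta$.

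With $s_g$ located the boundary cases are immediate. For $|\alpha+\beta|\ge|\gamma_1|+|\gamma_2|$ one has $f(s_g)=g(s_g)=\max g$, so usefulness is decided purely by whether $f$ increases into the interior, i.e. by the sign of $f'$ at $s_g$; substituting $P$ this reads $\gamma_1^2+\gamma_2^2\lessgtr P(\alpha+\beta)$, giving A.1/A.2 (not useful) versus B.1 (useful). The degenerate sub-case $\alpha=\beta$ of the inner regime is handled by a second-order expansion at $s_0=1/2$: there $h=0$, $f(\tfrac12)=g(\tfrac12)=|\gamma_1|+|\gamma_2|$, the first derivatives vanish, and a direct computation gives $f(\tfrac12+\epsilon)=|\gamma_1|+|\gamma_2|+2\epsilon^2(|\gamma_1|+|\gamma_2|)\tfrac{\alpha^2-|\gamma_1||\gamma_2|}{|\gamma_1||\gamma_2|}+O(\epsilon^4)$. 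Hence $s_0$ is a local minimum or maximum of $f$ according as $\alpha^2\gtrless|\gamma_1||\gamma_2|$, yielding B.2 (useful) and A.4 (not useful) respectively.

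The real obstacle is the remaining inner case $|\alpha+\beta|<|\gamma_1|+|\gamma_2|$ with $\alpha\neq\beta$ (claim A.3). Here $G$ is \emph{strictly concave}, so a priori its vertex could fall inside $(0,1)$ at a point with $h>0$, where $f(s_g)>g(s_g)$ and side entanglement would be useful; to recover the stated non-usefulness one must show that, under the admissibility constraints tying $(\alpha,\beta,\gamma_1,\gamma_2)$ to extreme channels — in particular $\gamma_1^2-\alpha\beta=(\cos\theta_1\cos\phi_2-\cos\theta_2\cos\phi_1)^2\ge0$ and $\gamma_2^2-\alpha\beta=(\sin\theta_1\sin\phi_2-\sin\theta_2\sin\phi_1)^2\ge0$ — the maximiser $s_g$ is forced onto the flat set $\{0,1,s_0\}$ and coincides there with $\mathrm{argmax}\,f$. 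Producing this clamping, i.e. ruling out a genuinely interior vertex of $G$ away from $s_0$, is the crux and the main difficulty: it is the one place where the channel-admissibility relations, rather than pure convexity bookkeeping, must carry the argument. I would attack it by combining the vertex equation $G'(s_g)=0$ with $h(s_g)>0$ and the two displayed identities to force a contradiction with $|\alpha+\beta|<|\gamma_1|+|\gamma_2|$, and would treat this step as the decisive one on which the statement of A.3 ultimately rests.
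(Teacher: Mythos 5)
Your reduction of the problem via Jensen's inequality to the ``flat set'' $\{0,1,s_0\}$, and your treatment of the regimes through the concavity/convexity of the radicands, is essentially the paper's own route (its Observation \ref{observation} plus the case analysis in the proof of Theorem \ref{thm:sidentangelmentnessery}, where your boundary-derivative tests appear as the sign of $P(\alpha+\beta)-(\gamma_1^2+\gamma_2^2)$ and as the explicit quartic $F(s)$). But the proposal has a genuine gap, and it is exactly the one you flagged: case A.3 is left unproven, with a stated hope that the admissibility identities $\gamma_1^2-\alpha\beta=(\cos\theta_1\cos\phi_2-\cos\theta_2\cos\phi_1)^2\ge 0$ and $\gamma_2^2-\alpha\beta=(\sin\theta_1\sin\phi_2-\sin\theta_2\sin\phi_1)^2\ge 0$ will ``clamp'' the maximiser of $g$ onto the flat set. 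That clamping cannot be produced, because it is false. Substituting \eqref{sol1} into the flatness condition \eqref{neqequiv} gives
\begin{equation}
(1-s_*)\alpha-s_*\beta=\frac{(|\gamma_1|+|\gamma_2|)^2\,(\beta-\alpha)}{2\left[(\alpha+\beta)^2-(|\gamma_1|+|\gamma_2|)^2\right]},
\end{equation}
which vanishes if and only if $\alpha=\beta$. So for $\alpha\neq\beta$ (with $s_*$ interior, which happens generically when $|\alpha+\beta|<|\gamma_1|+|\gamma_2|$) one has $h(s_*)>0$, and by your own strict-Jensen criterion $f(s_*)>g(s_*)=\max g$, i.e.\ side entanglement is \emph{useful}. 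Note that the paper's proof of A.3 asserts the opposite evaluation ($=0$ for $\alpha\neq\beta$), which is an algebra slip; your instinct that this case is the weak point is correct, but the contradiction you propose to derive does not exist. A concrete admissible instance: $\theta_1=0.1$, $\phi_1=1$, $\theta_2=0.2$, $\phi_2=2$ gives $\alpha\approx 0.0295$, $\beta\approx 0.1187$, $\gamma_1\approx 0.9455$, $\gamma_2\approx -0.0966$ (both channels genuinely extremal, $|\alpha+\beta|<|\gamma_1|+|\gamma_2|$, $\alpha\neq\beta$, $\gamma_i^2\ge\alpha\beta$); then $s_*\approx 0.506$, $\max_s g(s)\approx 1.0460$ while $f(s_*)\approx 1.0533$, so the entangled optimum strictly exceeds the unentangled one where A.3 claims it cannot.

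A secondary, fillable weakness: your derivative and $\epsilon^2$-expansion arguments establish only the ``useful'' directions. For the ``not useful'' claims the sign of $f'$ at the maximiser of $g$ is insufficient, since $f$ need not be unimodal (in the regime $|\alpha+\beta|<2|\gamma_M|$ one radicand is concave and the other convex, so $f$ can dip and rise again); the paper closes this by proving the quartic $F(s)\ge 0$ on $(0,1)$ under $P(\alpha+\beta)\ge\gamma_1^2+\gamma_2^2$, and for A.4 the substitution $u=(1-2s)^2$ makes $f$ concave in $u$, turning your local expansion at $s=1/2$ into a global maximality statement. These repairs are routine; the A.3 hole is not, as it reflects an error in the source rather than a missing lemma.
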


\begin{proof} 
When $|\gamma_1|=|\gamma_2|$, the functions we are optimizing become the same, i.e., $f\left(s\right)= g(s)$, hence side entanglement  cannot be useful as put forward in Observation \ref{observation}. 

\bigskip

$\bullet$ If ${2|\gamma_M|}\leq \vert \alpha+\beta\vert $, 
we have from Lemma \ref{L1},
\begin{align}
  {\rm Eq.}\eqref{eq:diff}&=\max_{0\leq s\leq 1}\sum_{i=1}^{2}\sqrt{\left((1-s) \alpha -s\beta\right)^2
  +4s (1-s) \gamma_i^2}-  {2 |P|} \notag\\
  &\leq \sum_{i=1}^{2}\max_{0\leq s\leq 1}\sqrt{\left((1-s) \alpha - s \beta\right)^2
  +4s (1-s) \gamma_i^2}-  {2 |P|} \notag\\
  &\leq \max\{|\alpha|,|\beta|\}+\max\{|\alpha|,|\beta|\} - {2 |P|}=0,
\end{align}
meaning that side entanglement is not useful. This proves A.1).

\bigskip

$\bullet$  When $\vert\gamma_1\vert+\vert \gamma_2\vert\leq\vert \alpha+\beta\vert 
<{2|\gamma_M|}$, if there exist a value $s$, $0<s<1$, for which
$f(s)>g(s_*)$, then side entanglement is useful. 
Thanks to Lemma \ref{L1} we can rewrite $f(s)>g(s_*)$ as
\be\label{c1}
\sum_{i=1}^{2}\sqrt{\left(s \alpha -(1- s) \beta\right)^2+4s(1-s) \gamma_i^2}> {2 |P|}.
\ee
Instead, if $\forall s\in \left(0,1\right)$ Eq.\eqref{c1} does not hold, then side entanglement is not useful.

The two above alternatives are equivalent to the fact that the following function
\bea
F(s)&:=&P^2(P^2-\beta^2)+2P^2\left( \beta(\alpha+\beta)-(\gamma_1^2+\gamma_2^2)\right)s\notag\\
&+&\left[P^2\left( 2(\gamma_1^2+\gamma_2^2)-(\alpha+\beta)^2\right)+(\gamma_1^2-\gamma_2^2)^2\right]s^2\notag\\
&-&2(\gamma_1^2-\gamma_2^2)^2 s^3+(\gamma_1^2-\gamma_2^2)^2 s^4,
\eea
can be negative for some values of $s$, or it is always not negative.

We then distinguish two cases:
\begin{itemize}
\item[i)] $ P=\alpha$.

\begin{itemize}
\item If $\alpha(\alpha+\beta)<\gamma_1^2+\gamma_2^2$, we can make $F(s)<0$ with $s$ arbitrarily close to $0$. Hence, side entanglement is useful. This partly proves B.1).
\item If $ \alpha (\alpha+\beta)\geq\gamma_1^2+\gamma_2^2$, for each $0<s<1$, we have
  \begin{equation}
    2\alpha(\alpha+\beta)-(\alpha+\beta)^2(1-s)-2s(\gamma_1^2+\gamma_2^2)\geq 0,
  \end{equation}
which implies that $F(s)\geq0$ and so, side entanglement is not useful. This partly proves A.2).
\end{itemize}

\item[ii)] $P=\beta$.

\begin{itemize}
\item If $\beta(\alpha+\beta)<\gamma_1^2+\gamma_2^2$, we can make $F(s)<0$, with $s$ arbitrarily close to $1$. Hence side entanglement is useful. This completes the proof of B.1).
\item If $ \beta (\alpha+\beta)\geq\gamma_1^2+\gamma_2^2$ 
for each $0<s<1$, we have
  \begin{equation}
    2\beta(\alpha+\beta)-(\alpha+\beta)^2(1-s)-2s(\gamma_1^2+\gamma_2^2)\geq 0,
  \end{equation}
which implies that $F(s)\geq0$ and so, side entanglement is not useful. This completes the proof of A.2).
\end{itemize}
\end{itemize}

\bigskip

$\bullet$ When $|\alpha+\beta|<|\gamma_1|+|\gamma_2|$, it is
\bea\label{neqequiv}
 f\left(s_*\right) \neq g(s_*) 
\Leftrightarrow 
\left((1-s_*) \alpha - s_* \beta\right) \left( |\gamma_1|-|\gamma_2|\right)^2\neq 0,
\eea
for any $0<s_*<1$.
Inserting the solution \eqref{sol1} into \eqref{neqequiv}, we can realize that 
$ \left((1-s_*) \alpha - s_* \beta\right) \left( |\gamma_1|-|\gamma_2|\right)^2 = 0$ for
$\alpha\neq\beta$.
Hence, in this case side entanglement is not useful. This proves A.3).

\bigskip

$\bullet$ When $\vert \alpha+\beta\vert < \vert \gamma_1\vert+\vert\gamma_2\vert$ and $\alpha=\beta$, the solution \eqref{sol1} becomes $s_*=\frac{1}{2}$. One can easily check that this is also an extrema of the function $f$, but it turns out to be the maximum only when $\alpha^2\leq|\gamma_1||\gamma_2|$. 
It means that if 
$\alpha^2>|\gamma_1||\gamma_2|$, side entanglement is useful, proving B.2). Instead 
if $\alpha^2\leq|\gamma_1||\gamma_2|$ side entanglement is not useful, proving A.4).

\end{proof}

\begin{remark}
The channels of Theorem \ref{thm:sidentangelmentnessery} are on the faces of the tetrahedron and correspond to ellipsoids that have one point in common with the boundary of the Bloch sphere. Examples of this kind of channels are those having two Kraus operators that are non-unitary, like the amplitude damping channel that can be written as $ \mathcal{N}_{\phi, \theta=0}(\rho)$. For it, from Theorem \ref{thm:sidentangelmentnessery}, we can easily recover the results of Ref.\cite{MilaJPA}.
\end{remark}


\section{Discriminating between non-extremal qubit channels}\label{sec:non-extreme}

In this Section we extend the result of Theorem  \ref{thm:sidentangelmentnessery}  to the discrimination between two non-extremal qubit channels. These can be written as convex combination of extremal channels as follows:
\begin{align}
{\cal N}_{\{\lambda_1,\phi_1,\theta_1,\phi_1',\theta_1'\}} &=\lambda_1\cN_{\phi_1,\theta_1}+(1-\lambda_1)\cN_{\phi_1',\theta_1'}, \label{N1}\\
{\cal N}_{\{\lambda_2,\phi_2,\theta_2,\phi_2',\theta_2'\}}&=\lambda_2\cN_{\phi_2,\theta_2}+(1-\lambda_2)\cN_{\phi_2',\theta_2'}, \label{N2}
\end{align}
where $0< \lambda_1, \lambda_2< 1$.

Then, we redefine the parameters of \eqref{re:18}, \eqref{re:19}, \eqref{re:20} and \eqref{gamma2} 
in a more general fashion as
\begin{align}
  \alpha &:=(\lambda_1\cos^2\theta_1+(1-\lambda_1)\cos^2\theta'_1 )-(\lambda_2\cos^2\theta_2+(1-\lambda_2)\cos^2\theta'_2)\label{al2}, \\
     \beta&:=(\lambda_1\cos^2\phi_1+(1-\lambda_1)\cos^2\phi'_1 )-(\lambda_2\cos^2\phi_2+(1-\lambda_2)\cos^2\phi'_2) \label{be2},\\
   \gamma_1&:=\lambda_1\cos\phi_1\cos\theta_1+(1-\lambda_1)\cos\phi'_1 \cos\theta'_1-
   \lambda_2\cos\phi_2\cos\theta_2-(1-\lambda_2)\cos\phi'_2\cos\theta'_2 \label{g12},\\
    \gamma_2&:=\lambda_1\sin\phi_1\sin\theta_1+(1-\lambda_1)\sin\phi'_1\sin\theta'_1-
   \lambda_2\sin\phi_2\sin\theta_2-(1-\lambda_2)\sin\phi'_2\sin\theta'_2.\label{g22}
\end{align}

\begin{remark}\label{rem4}
{
Note that all the machinery of Lemmas \ref{L1}, \ref{L3} and Observation \ref{observation} applies 
to parameters \eqref{al2}-\eqref{g22} as well. At the core of this is the fact that, proceeding similarly to Eq.\eqref{eq:idNPhi}, we can now find
\bea
\operatorname{id} \otimes \mathcal{N}_{\{\lambda,\phi,\theta,\phi',\theta'\}}(\Phi)
=\lambda_1\left(\ket {\Phi'}\bra{\Phi'}+\ket {\Phi''}\bra{\Phi''}\right)+(1-\lambda_1)\left(\ket {\Gamma'}\bra{\Gamma'}+\ket {\Gamma''}\bra{\Gamma''}\right),
\eea
where $\ket {\Phi'}$, $\ket {\Phi''}$ are as in Eq.\eqref{re:12}, while
$\ket {\Gamma'}:=a_0\cos\theta_1'\ket{00}+a_1\cos\phi_1'\ket{11}$, 
$\ket {\Gamma''}:=a_0\sin\theta_1'\ket{01}+a_1\sin\phi_1'\ket{10}$.
Then, since $\{\ket {\Phi'},\ket {\Gamma'} \}\perp\{\ket {\Phi''},\ket {\Gamma''} \}$, 
we can draw a conclusion analogous to Eq.\eqref{re:26}, namely
\be\label{eq60}
\Vert \Delta_{\operatorname{id} \otimes \mathcal{N}} (\Phi)\Vert_1=\Vert\lambda_1\Delta_{\Phi'}+(1-\lambda_1)\Delta_{\Gamma'}\Vert_1+\Vert\lambda_1\Delta_{\Phi''}+(1-\lambda_1)
\Delta_{\Gamma''}\Vert_1.
\ee

However, parameters \eqref{al2}-\eqref{g22} } no longer guarantee the satisfaction of conditions
$(\gamma_j^2-\alpha\beta)\geq 0$, for $j=1,2$, as in Sec.\ref{sec:extreme}. 
Whenever $\gamma_m^2 \geq \alpha\beta$, this holds true and 
 Theorem \ref{thm:sidentangelmentnessery} is directly applicable.   
For the other cases we have the following results.
\end{remark}

\begin{theorem}\label{thm:nonextr}
In the discrimination between two qubit channels,
${\cal N}_{\{\lambda_1,\phi_1,\theta_1,\phi_1',\theta_1'\}}$
and ${\cal N}_{\{\lambda_2,\phi_2,\theta_2,\phi_2',\theta_2'\}}$,
that are not extreme point maps of $\mathfrak{N}$, when $\gamma_M^2\leq \alpha \beta$
side entanglement is not useful. Instead, when 
$\gamma_m^2<\alpha\beta<\gamma_M^2$ 
the following exhausts all possibilities:
\begin{itemize}
\item[A.] Side entanglement is not useful when
\begin{enumerate}

\item[1)] 
$|\gamma_1|+|\gamma_2|>|\alpha +\beta|$, while $\alpha=\beta$ and $|\gamma_m|=|\alpha|$;

\item[2)]
$|\gamma_1|+|\gamma_2|>|\alpha +\beta|$ while $\alpha\neq\beta$, 
$ 2\gamma_M(\alpha+\beta)=(|\gamma_1|+|\gamma_2|)^2$, 
 and
$\gamma_M=\frac{1+\alpha\beta\pm\sqrt{(\alpha^2-1)(\beta^2-1)}}{\alpha+\beta}$;

\item[3)] 
$|\gamma_1|+|\gamma_2|\leq|\alpha +\beta|$  and $|\gamma_M|\leq  {|P|}$.
\end{enumerate}

\item[B.] Side entanglement is useful when

\begin{enumerate}
\item[1)] 
$|\gamma_1|+|\gamma_2|>|\alpha +\beta|$, while $\alpha=\beta$ and $|\gamma_m|\neq|\alpha|$;

\item[2)] 
$|\gamma_1|+|\gamma_2|>|\alpha +\beta|$, while $\alpha\neq\beta$ and 
$ 2\gamma_M(\alpha+\beta)\neq(|\gamma_1|+|\gamma_2|)^2$;

\item[3)] $|\gamma_1|+|\gamma_2|>|\alpha +\beta|$, while $\alpha\neq\beta$, 
$ 2\gamma_M(\alpha+\beta)=(|\gamma_1|+|\gamma_2|)^2$,
 and 
$\gamma_M\neq\frac{1+\alpha\beta\pm\sqrt{(\alpha^2-1)(\beta^2-1)}}{\alpha+\beta}$;

\item[4)] 
$|\gamma_1|+|\gamma_2|\leq|\alpha +\beta|$ and $|\gamma_M|>{|P|}$.
\end{enumerate}

\end{itemize}
\end{theorem}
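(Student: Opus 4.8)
The plan is to reduce both optima to one-dimensional problems and then compare them regime by regime. By Remark~\ref{rem4} the side-entangled optimum is still given by Eq.\eqref{eq35}; using the identity $((1-s)\alpha+s\beta)^2+4s(1-s)(\gamma_j^2-\alpha\beta)=((1-s)\alpha-s\beta)^2+4s(1-s)\gamma_j^2$ together with $\tfrac12(|x+y|+|x-y|)=\max\{|x|,|y|\}$, it becomes $\max_{0\le s\le1}\sum_{j=1}^{2}\max\{|b(s)|,R_j(s)\}$, where $b(s):=(1-s)\alpha+s\beta$ and $R_j(s):=\sqrt{((1-s)\alpha-s\beta)^2+4s(1-s)\gamma_j^2}$, while the product optimum is $\max_{0\le s\le1}g(s)$ from Eq.\eqref{eqL1}. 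Since $R_j(s)\ge|b(s)|$ iff $\gamma_j^2\ge\alpha\beta$, the entangled optimand equals $2|b(s)|$ when $\gamma_M^2\le\alpha\beta$ and $|b(s)|+R_M(s)$ (with $R_M$ formed from $\gamma_M$) when $\gamma_m^2<\alpha\beta<\gamma_M^2$.

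First I would dispose of $\gamma_M^2\le\alpha\beta$: there $2|b(s)|$ peaks at an endpoint with value $2|P|$, while $b(s)^2-\tfrac14g(s)^2=s(1-s)\big(4\alpha\beta-(|\gamma_1|+|\gamma_2|)^2\big)\ge0$ forces $\max g=2|P|$ as well, so side entanglement is not useful, proving the first assertion. For $\gamma_m^2<\alpha\beta<\gamma_M^2$ I would compute the entangled maximum by differentiating $|b|+R_M$: its interior critical value is $\dfrac{2(\gamma_M^2-\alpha\beta)}{2|\gamma_M|-(\alpha+\beta)}$ and the critical point lies in $[0,1]$ exactly when $|\gamma_M|>|P|$, the endpoint value $2|P|$ prevailing otherwise. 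In parallel, the closed form \eqref{eqL1} for $\max g$ is the interior one precisely when the solution \eqref{sol1} lies in $[0,1]$, i.e.\ when $(|\gamma_1|+|\gamma_2|)^2>2|P|(\alpha+\beta)$, and equals $2|P|$ otherwise.

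Now I compare. Whenever $\max g=2|P|$ (in particular when $|\gamma_1|+|\gamma_2|\le|\alpha+\beta|$), side entanglement helps iff $\max(|b|+R_M)>2|P|$; squaring turns this into $\gamma_M^2-P^2\le s(\gamma_M^2-\alpha\beta)$, which holds for every $s$ exactly when $|\gamma_M|\le|P|$, yielding A.3 against B.4. In the remaining, interior, situation I run Observation~\ref{observation} with $f$ replaced by $|b|+R_M$: with $s_*$ the maximizer of $g$ one has $\max(|b|+R_M)\ge(|b|+R_M)(s_*)\ge g(s_*)$, so only a double equality can spoil usefulness. For $\alpha=\beta$, $b\equiv\alpha$ and $s_*=\tfrac12$ maximizes both, so $\max(|b|+R_M)-\max g=|\alpha|-|\gamma_m|>0$ throughout the regime (B.1), the border $|\gamma_m|=|\alpha|$ giving A.1. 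For $\alpha\neq\beta$ a direct calculation shows $s_*$ is stationary for $|b|+R_M$ iff $2\gamma_M(\alpha+\beta)=(|\gamma_1|+|\gamma_2|)^2$; when this fails the first inequality is strict and side entanglement is useful (B.2). Under $2\gamma_M(\alpha+\beta)=(|\gamma_1|+|\gamma_2|)^2$ one finds $R_M(s_*)=|\gamma_M|$, hence $\max(|b|+R_M)=\dfrac{2(\gamma_M^2-\alpha\beta)}{2|\gamma_M|-(\alpha+\beta)}$, and the equality $\max(|b|+R_M)=g(s_*)$ becomes a polynomial condition in $\gamma_M$; matching it to the stated condition $\gamma_M=\frac{1+\alpha\beta\pm\sqrt{(\alpha^2-1)(\beta^2-1)}}{\alpha+\beta}$ requires feeding in the relations that tie $\gamma_1,\gamma_2$ to $\alpha,\beta$ through $\sin^2=1-\cos^2$ in \eqref{tM}, and separates A.2 (not useful) from B.3 (useful).

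The last sub-case is the crux. First one must show that the stationary point $s_*$ is the genuine global maximizer of the non-smooth map $|b|+R_M$ and not a spurious root, keeping the competing endpoint value $2|P|$ under control; the threshold $|\gamma_M|$ versus $|P|$ governs this and must be tracked alongside the threshold $(|\gamma_1|+|\gamma_2|)^2$ versus $2|P|(\alpha+\beta)$ that locates the maximizer of $g$. Second, the equality $\max(|b|+R_M)=g(s_*)$ yields, after squaring, a quartic in $\gamma_M$ whose reduction to the stated quadratic — including the correct $\pm$ branch in front of $\sqrt{(\alpha^2-1)(\beta^2-1)}$ — is pinned down only by the admissibility constraints $|\alpha|,|\beta|\le1$ of \eqref{tM}, together with a careful record of the sign of $\gamma_M$, of $\alpha+\beta$, and of which of $\gamma_1,\gamma_2$ realizes $\gamma_M$. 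This sign- and branch-bookkeeping, rather than any single hard inequality, is the principal obstacle.
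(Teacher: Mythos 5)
Your plan tracks the paper's proof almost branch for branch: the same reduction via Remark \ref{rem4} and Eq.~\eqref{eq35} to comparing $G(s)=|b(s)|+R_M(s)$ against $g(s)$; the same disposal of $\gamma_M^2\le\alpha\beta$ by showing both maxima equal $2|P|$ (your version is in fact cleaner than the paper's, which writes $\gamma_m^2\le\alpha\beta$ and drops a factor of $2$); the same interior critical value of $G$, which matches the paper's closed form obtained from \eqref{sstar}; the same argmax-comparison logic of Observation \ref{observation} for A.1/B.1/B.2; and the same $|\gamma_M|$-versus-$|P|$ threshold for A.3/B.4 (the paper reaches it through the sign of the quartic $R(s)$; your displayed squared inequality in that step is garbled, but your earlier critical-point dichotomy already yields the conclusion). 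The genuine gap is precisely the branch you defer as ``the crux'': A.2 versus B.3 is left unexecuted, and your sketch of it points in the wrong direction. You predict a quartic in $\gamma_M$ whose reduction to $\gamma_M=\frac{1+\alpha\beta\pm\sqrt{(\alpha^2-1)(\beta^2-1)}}{\alpha+\beta}$ would require the trigonometric ties from \eqref{tM}; no such input is used or needed. In the paper one substitutes the case hypothesis $(|\gamma_1|+|\gamma_2|)^2=2\gamma_M(\alpha+\beta)$ into the interior closed form of Lemma \ref{L1}, whereupon the ratio of the two maxima becomes $\frac{1}{2}\left(\sqrt{x}+1/\sqrt{x}\right)$ with $x=\frac{\alpha+\beta-2\gamma_M}{\gamma_M\left(2\alpha\beta-\gamma_M(\alpha+\beta)\right)}$; by AM--GM this is $\ge 1$ with equality iff $x=1$, i.e.\ iff $(\alpha+\beta)\gamma_M^2-2(1+\alpha\beta)\gamma_M+(\alpha+\beta)=0$, a quadratic whose discriminant factors as $(1+\alpha\beta)^2-(\alpha+\beta)^2=(1-\alpha^2)(1-\beta^2)$. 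The sign- and branch-bookkeeping you anticipate as the principal obstacle simply dissolves.

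A second concrete problem: your (correct) interiority criterion for $\max g$, namely $(|\gamma_1|+|\gamma_2|)^2>2|P||\alpha+\beta|$, is strictly stronger than the theorem's case boundary $|\gamma_1|+|\gamma_2|>|\alpha+\beta|$ once $\gamma_m^2<\alpha\beta$ (e.g.\ $\alpha=0.9$, $\beta=0.5$, $|\gamma_M|=0.8$, $|\gamma_m|=0.65$ gives $|\gamma_1|+|\gamma_2|=1.45>1.4$ yet $(1.45)^2<2.52$, so by your own analysis $\max g=2|P|$ at an endpoint and, since $|\gamma_M|<|P|$, entanglement would be useless, while the theorem's B.2 would declare it useful). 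The paper never meets this slice because it applies \eqref{sol1} verbatim whenever $|\gamma_1|+|\gamma_2|>|\alpha+\beta|$; you, having corrected the criterion, then analyze only ``the remaining, interior, situation'' and never reconcile the slice $|\gamma_1|+|\gamma_2|>|\alpha+\beta|$ with $(|\gamma_1|+|\gamma_2|)^2\le 2|P||\alpha+\beta|$ against the case list you are supposed to prove. To close the argument you must either show this slice is empty for parameters actually realizable through \eqref{al2}--\eqref{g22}, or handle it explicitly; as written, your proposal establishes a classification whose case boundaries do not coincide with those of the stated theorem.
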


\begin{proof} 
If $\gamma_m^2\leq \alpha \beta$, on the one hand we have from Eq.\eqref{eq35}
$\max_{\Psi}\Vert \Delta_{\operatorname{id} \otimes \mathcal{N}} (\Psi)\Vert_1
=\max_{0\leq s \leq 1}\vert (1-s) \alpha+s\beta   \vert$,
on the other hand from Lemma \ref{L1} we have
$\max_{\psi}\Vert \Delta_{\mathcal{N}} (\psi )\Vert_1\geq\max_{0\leq s\leq 1}\vert (1-s)\alpha+s\beta   \vert$.
Thus 
  \begin{align}
   \max_{\Psi}\Vert \Delta_{\operatorname{id} \otimes \mathcal{N}} (\Psi)\Vert_1
   -\max_{\psi}\Vert \Delta_{\mathcal{N}} (\psi )\Vert_1
   &\leq
   \max_{0\leq s\leq 1}\vert (1-s)\alpha+s\beta   \vert
   -\max_{0\leq s\leq 1}\vert (1-s)\alpha+s\beta   \vert=0,
  \end{align}
meaning that side entanglement is not useful.

\bigskip  

The rest of the proof is devoted to the case $\gamma_m^2<\alpha\beta<\gamma_M^2$. 
In such a case Eq.\eqref{eq35} reduces to
  \begin{align}
   \max_{\Psi}\Vert \Delta_{\operatorname{id} \otimes \mathcal{N}} (\Psi)\Vert_1
   =\max_{0\leq s\leq 1}\left( |(1-s)\alpha+s\beta|+\sqrt{\left( (1-s) \alpha + s \beta\right)^2
   +4 s(1-s) (\gamma_M^2-\alpha\beta)}\right)\label{re:63}.
  \end{align}
  Let us define
  \be\label{Gdef}
 G(s):= |(1-s)\alpha+s\beta|+\sqrt{\left((1-s)\alpha +s \beta\right)^2+4s(1-s) (\gamma_M^2-\alpha\beta)}.
  \ee
This function has maxima for the argument $s_*=\min\{\max\{0,\tilde{s}\},1\}$, where
  \be
  \tilde{s}=\left\{
  \begin{matrix}  
  \frac{\gamma_M-\alpha}{2\gamma_M-(\alpha+\beta)}, && \text{when} \quad \gamma_M (\alpha+\beta)>0\\ \\
  \frac{\gamma_M+\alpha}{2\gamma_M+(\alpha+\beta)}, && \text{when} \quad \gamma_M (\alpha+\beta)<0
  \end{matrix}
  \right.,\label{sstar}
  \ee
 leading to 
 \be
 \max_{\Psi}\Vert \Delta_{\operatorname{id} \otimes \mathcal{N}} (\Psi)\Vert_1=\left\{
  \begin{matrix}
 |\gamma_M|+\left| \frac{2 \alpha  \beta -\gamma_M (\alpha +\beta )}{\alpha +\beta -2 \gamma_M}\right|&& \text{when}\quad  \gamma_M(\alpha+\beta)>0 \\ \\
 |\gamma_M|+\left| \frac{2 \alpha  \beta+ \gamma_M (\alpha +\beta )}{\alpha +\beta +2 \gamma_M}\right|&& 
 \text{when}\quad \gamma_M(\alpha+\beta)<0
  \end{matrix}
  \right..
 \ee
 
\bigskip
 
$\bullet$ Consider now $|\gamma_1|+|\gamma_2|>|\alpha +\beta|$. We distinguish the following cases:
 \begin{enumerate}
 \item When  $\alpha\neq\beta$ and $ 2\gamma_M(\alpha+\beta)\neq(|\gamma_1|+|\gamma_2|)^2$, using \eqref{sol1} and \eqref{sstar}, we have $s_*\neq |a_1|^2$, meaning that side entanglement is useful. This proves B.2).
 
 
 \item When $\alpha\neq\beta$ and $2\gamma_M(\alpha+\beta)=(|\gamma_1|+|\gamma_2|)^2$ we have 
$\gamma_M (\alpha+\beta)>0$, and $ s^*=|a_1|^2=\frac{\gamma_M+\alpha}{2\gamma_M+(\alpha+\beta)}$ 
which yields
\begin{align}
\frac{\max_{\Psi}\Vert \Delta_{\operatorname{id} \otimes \mathcal{N}} (\Psi)\Vert_1}{\max_{\psi}\Vert \Delta_{\mathcal{N}} (\psi )\Vert_1}
&=\frac{1}{2}\left( \sqrt{\frac{\alpha +\beta -2\gamma_M}{\gamma_M\left(2 \alpha  \beta -\gamma_M(\alpha +\beta)\right) }}+\sqrt{\frac{\gamma_M\left(2 \alpha  \beta -\gamma_M(\alpha +\beta)\right) }{\alpha +\beta -2\gamma_M}}\right).
\label{FRAC}
\end{align}
Side entanglement is useful when \eqref{FRAC}$>1$. This happens for
 $\gamma_M\neq\frac{1+\alpha\beta\pm\sqrt{(\alpha^2-1)(\beta^2-1)}}{\alpha+\beta}$, proving B.4).
In contrast, for $\gamma_M=\frac{1+\alpha\beta\pm\sqrt{(\alpha^2-1)(\beta^2-1)}}{\alpha+\beta}$ 
it results \eqref{FRAC}${=}1$ and side entanglement is useless. This proves A.2).

 
\item When $\alpha=\beta$, again using \eqref{sol1} and \eqref{sstar}, we have $s_*=|a_{1}|^2=\frac{1}{2}$
leading to 
\bea
G\left(\frac{1}{2}\right)=|\gamma_M|+|\alpha| \quad \text{and} \quad
g\left(\frac{1}{2}\right)=| \gamma_M| +| \gamma_m|, 
\eea
with $G$, $g$ defined respectively in \eqref{Gdef}, \eqref{gdef}.
Thus, we have that
\begin{itemize}
\item[-] for $|\gamma_m|\neq|\alpha|$ it is $G(1/2)\neq g(1/2)$ and side entanglement is useful.
This proves B.1);

\item[-] for $|\gamma_m|=|\alpha|$ it is $G(1/2)=g(1/2)$ and side entanglement is not useful.
This proves A.1).
\end{itemize}

\end{enumerate}

\bigskip

$\bullet$ Finally, when $|\gamma_1|+|\gamma_2|\leq|\alpha +\beta|$, 
if there exist a value $s$, $0<s<1$ for which
$G(s)>g(s_*)$, then side entanglement is useful. 
Actually we can rewrite $G(s)>g(s_*)$ as
\be\label{eq69}
|(1-s)\alpha+s\beta|+\sqrt{\left((1-s) \alpha - s \beta\right)^2+4(s-s^2) \gamma_M^2}>2 {|P|}.
\ee
Instead, if $\forall s\in \left(0,1\right)$ Eq.\eqref{eq69} does not hold, then side entanglement is not useful.

The two above alternatives are equivalent to the fact that the following function 
\begin{align}
 R(s)&:=P^2( P^2-\alpha ^2)+2 P^2 \left(\alpha ^2-\gamma_M^2 \right)s \notag\\
 &+ \left[(\gamma_M^2 -\alpha  \beta )^2-P^2 \left(\alpha ^2+\beta ^2-2 \gamma_M^2 \right)\right]s^2\notag\\
&-2 s^3 (\gamma_M^2 -\alpha  \beta )^2+ (\gamma_M^2 -\alpha  \beta )^2 s^4,
\end{align}
can be negative for some values of $s$, or it is always not negative.

 \begin{itemize}
    \item[-] For $|\gamma_M|>{|P|}$, we can choose $s$ arbitrarily close to zero (or to 1) to make $R(s)<0$.
 This shows that side entanglement is useful and proves B.5).
    \item[-] For $|\gamma_M|\leq {|P|}$, for any $0{\leq} s\leq 1$, we have $R(s)>0$.
 It means that side entanglement is not useful, proving A.3).
  \end{itemize}
  
\end{proof}


\bigskip


\begin{remark}
The channels of Theorem \ref{thm:nonextr} are inside the tetrahedron and correspond to ellipsoids that have no point in common with the boundary of the Bloch sphere. Examples of this kind of channels are the quantum Pauli channels, that can be written as 
$ \lambda \mathcal{N}_{\theta, \theta}(\rho)+(1-\lambda)\mathcal{N}_{\theta', \pi+\theta'}(\rho)$.
For them, from Theorem \ref{thm:nonextr}, we can easily recover the results of Ref.\cite{Sacchi2005}.
\end{remark}


\section{Conclusion}\label{conclu}
{ In summary, we investigated the conditions under which side entanglement can effectively be used for channel discrimination and identified the specific circumstances in which it enhances, as well as those in which it does not enhance, the performance of discerning between two generic qubit channels, up to unitary pre- and post-processing. We present our findings in a decision tree, which is included in Appendix \ref{app}.

Our results were obtained through a constructive approach, whereby we first established the uselessness of side entanglement in discriminating quasi-extremal channels within the set $\mathfrak{N}$. We then analyzed the problem for extremal maps and finally extended our findings to generic channels that may be located within such a set.}

{ Our results can be straightforwardly extended to channels that include pre and post-processing unitaries, provided that these are diagonal or anti-diagonal { in the canonical basis.} This is because, such unitaries will preserve the two subspaces where the optimal input lives ($span\{\vert 00\rangle,\vert 11\rangle\}$, or $span\{\vert 01\rangle,\vert 10\rangle\}$) or map one into another. Furthermore, relations (24-25) and (38-39) will still hold true for newly defined parameters $\alpha,\beta, \gamma_1$ and $\gamma_2$.}

Several partial results previously obtained for specific kind of qubit channels can be recovered as particular cases of our general results. Examples are the discrimination of dephasing channel, amplitude damping channel and depolarizing channel.

From Theorems \ref{thm:sidentangelmentnessery} and \ref{thm:nonextr} appears that, quite generally, side entanglement is useful only in a limited portion of the parameters' space. Additionally, the profit is often brought by non-maximally entangled states. 

Our findings are based on the characterization of qubit channels in terms of affine maps on $\mathbb{R}^3$.
Geometric characterization of the action of quantum channels in higher dimensional systems would be useful for extending our results beyond qubit channels.

{\ 
Moving on from the framework of point-to-point quantum channel \cite{Gyo1}, one could
shift to more complex structures to discriminate, namely quantum networks \cite{Gyo2}. In contrast to channels, networks allow for intermediate access
points where information can be received, processed and reintroduced into the network.
A first attempt in this direction has been done in Ref. \cite{Hir21}
using the idea of quantum super-channels.
}


\appendix

\section{}\label{app}

\begin{figure}[H]
\centering
\begin{tikzpicture}[font=\footnotesize,thick]
 
\node[draw,
    rectangle,
    minimum height=1cm] (blocka1) {$\gamma_M^2\leq\alpha\beta$};
 \node[draw,
    rectangle,
    minimum width=3.5cm,
    minimum height=1cm,left of=blocka1,xshift=6.5cm] (blockb1) {$\gamma_m^2<\alpha\beta<\gamma_M^2$};
    \node[draw,
    rectangle,
    minimum height=1cm,left of=blockb1,xshift=7cm] (blockc1) {$\alpha\beta\leq\gamma_m^2$};
    { 
 
\node[draw,
    ellipse,text width=1cm,text centered,
    below=of blocka1] (blocka2) {Not useful};

\node[draw,
   rectangle,
    below=of blockb1,xshift=-0.2,yshift=0.2cm] (blockb2) {$|\gamma_1|+|\gamma_2|>|\alpha+\beta|$};

\node[draw,
   rectangle,
    below left=of blockb2,
   yshift=0.5cm,xshift=0.8cm] (blockb31){$\alpha=\beta$};
\node[draw,
   rectangle,
    below right=of blockb2,
xshift=-1.3cm,yshift=0.25cm] (blockb32) {$|\gamma_M|\leq |P|$};
   \node[draw,
   rectangle,
    below right=of blockb31,xshift=-0.6cm,yshift=-1.5cm] (blockb42){$|\gamma_m|=|\alpha|$};
\node[draw,
   rectangle,
    below left=of blockb31,text width=2.5cm,text centered,yshift=-0.5cm,
 xshift=1cm] (blockb41) {$2\gamma_M(\alpha+\beta)=(|\gamma_1|+|\gamma_2|)^2$};
  \node[draw,
    ellipse,text width=1cm,text centered,
    below left=of blockb32,xshift=1.25cm] (blockb43){Not useful};
\node[draw,
   ellipse ,text width=1cm,text centered,xshift=-1.5cm, yshift=0cm,
    below right=of blockb32] (blockb44) {Useful};
   
   \node[draw,
     ellipse,text centered,
    below left=of blockb41,xshift=1.4cm,yshift=-0.4cm] (blockb51) {Useful};
   
   \node[draw,
     rectangle,text centered,
    below right=of blockb41,
   xshift=-2.8cm, yshift=-2.4cm] (blockb52) {$\gamma_M=\frac{1+\alpha\beta\pm\sqrt{(\alpha^2-1)(\beta^2-1)}}{\alpha+\beta}$};
   
\node[draw,
    ellipse,text width=1cm,text centered,
    below left=of blockb42,xshift=1.4cm,yshift=-0.4cm] (blockb53) {Not useful};
   
   \node[draw,
     ellipse,text width=1cm,text centered,
    below right=of blockb42,
   xshift=-1.5cm, yshift=0cm] (blockb54) {Useful};
   \node[draw,
    ellipse,text width=1cm,text centered,
    below left=of blockb52,xshift=1cm,yshift=-0.4cm] (blockb61) {Not useful};
   
   \node[draw,
     ellipse,text width=1cm,text centered,yshift=-0.5cm,
    below =of blockb52,
   xshift=0cm, yshift=0cm] (blockb62) {Useful};
\node[draw,
   rectangle,xshift=0cm,
    below =of blockc1] (blockc2) {$|\gamma_1|=|\gamma_2|$};
\node[draw,
   rectangle,
    below left=of blockc2,xshift=2.2cm
  ,yshift=-2.5cm] (blockc31) {$|\gamma_1|+|\gamma_2|>|\alpha+\beta|$};
\node[draw,
  ellipse,text width=1cm,text centered, xshift=-1.25cm,
    below right=of blockc2] (blockc32) {Not useful};
    
    \node[draw,
   rectangle,
    below left=of blockc31,xshift=2cm
  ,yshift=-0.4cm] (blockc41) {$2|\gamma_M|>|\alpha+\beta|$};
\node[draw,
  rectangle,text width=1cm,text centered, xshift=-1cm,yshift=-0.6cm,
    below right=of blockc31] (blockc42) {$\alpha=\beta$};
    
    \node[draw,
   rectangle,
    below left=of blockc41,xshift=2.2cm
  ,yshift=-0.5cm] (blockc51) {$\gamma_1^2+\gamma_2^2>P(\alpha+\beta)$};
\node[draw,
  ellipse,text width=1cm,text centered, xshift=-1.25cm,
    below right=of blockc41] (blockc52) {Not useful};
    
      \node[draw,
   ellipse,
    below left=of blockc51,xshift=1.5cm
  ,yshift=-0.5cm] (blockc61) {Useful};
\node[draw,
  ellipse,text width=1cm,text centered, xshift=-1.25cm,
    below right=of blockc51] (blockc62) {Not useful};

    \node[draw,
   rectangle,
    below left=of blockc42,xshift=1.8cm
  ,yshift=-1.5cm] (blockc53) {$|\gamma_1||\gamma_2|<\alpha^2$};
\node[draw,
  ellipse,text width=1cm,text centered, xshift=-1.25cm,
    below right=of blockc42] (blockc54) {Useful};

      \node[draw,
   ellipse,
    below left=of blockc53,xshift=1.2cm
  ,yshift=-0.5cm] (blockc63) {Useful};
\node[draw,
  ellipse,text width=1cm,text centered, xshift=-1.25cm,
    below right=of blockc53] (blockc64) {Not useful};

\draw[-latex] (blocka1) --node[inner sep=1,fill=white] {No} (blockb1);

\draw[-latex] (blockb1)  --node[inner sep=1,fill=white] {No} (blockc1);

\draw[-latex] (blocka1)  --node[inner sep=1,fill=white] {Yes} (blocka2);

\draw[-latex] (blockb1) --node[inner sep=1,fill=white] {Yes} (blockb2);

\draw[-latex] (blockc1) --node[inner sep=1,fill=white] {Yes}(blockc2);

\draw[-latex] (blockb2) -| node[pos=0.55,yshift=0cm,inner sep=1,fill=white] {Yes} (blockb31);
\draw[-latex] (blockb2) -|node[pos=0.55,yshift=-0.5cm,inner sep=1,fill=white]{No} (blockb32);
 
\draw[-latex] (blockb31) -| node[pos=0.55,yshift=-1cm,inner sep=1,fill=white] {No} (blockb41);
\draw[-latex] (blockb31) -|node[pos=0.55,yshift=-0.75cm,inner sep=1,fill=white]{Yes} (blockb42);

\draw[-latex] (blockb32) -| node[pos=0.55,yshift=-0.5cm,inner sep=1,fill=white] {Yes} (blockb43);
\draw[-latex] (blockb32) -|node[pos=0.55,yshift=-0.55cm,inner sep=1,fill=white]{No} (blockb44);

\draw[-latex] (blockb41) -| node[pos=0.55,yshift=-0.5cm,inner sep=1,fill=white] {No} (blockb51);
\draw[-latex] (blockb41) -|node[pos=0.55,yshift=-0.65cm,inner sep=1,fill=white]{Yes} (blockb52);

\draw[-latex] (blockb42) -| node[pos=0.55,yshift=-0.5cm,inner sep=1,fill=white] {Yes} (blockb53);
\draw[-latex] (blockb42) -|node[pos=0.55,yshift=-0.5cm,inner sep=1,fill=white]{No} (blockb54);

\draw[-latex] (blockb52) -| node[pos=0.55,yshift=-0.5cm,inner sep=1,fill=white] {Yes} (blockb61);
\draw[-latex] (blockb52) --node[pos=0.55,yshift=0cm,inner sep=1,fill=white]{No} (blockb62);

\draw[-latex] (blockc2) -| node[pos=0.55,yshift=-0.5cm,inner sep=1,fill=white] {No} (blockc31);
\draw[-latex] (blockc2) -|node[pos=0.55,yshift=-0.5cm,inner sep=1,fill=white]{Yes} (blockc32);

\draw[-latex] (blockc31) -| node[pos=0.55,yshift=-0.5cm,inner sep=1,fill=white] {No} (blockc41);
\draw[-latex] (blockc31) -|node[pos=0.55,yshift=-0.5cm,inner sep=1,fill=white]{Yes} (blockc42);

\draw[-latex] (blockc41) -| node[pos=0.55,yshift=-0.5cm,inner sep=1,fill=white] { Yes} (blockc51);
\draw[-latex] (blockc41) -|node[pos=0.55,yshift=-0.5cm,inner sep=1,fill=white]{No} (blockc52);

\draw[-latex] (blockc51) -| node[pos=0.55,yshift=-0.5cm,inner sep=1,fill=white] { Yes} (blockc61);
\draw[-latex] (blockc51) -|node[pos=0.55,yshift=-0.5cm,inner sep=1,fill=white]{No} (blockc62);

\draw[-latex] (blockc42) -| node[pos=0.55,yshift=-1cm,inner sep=1,fill=white] {Yes} (blockc53);
\draw[-latex] (blockc42) -|node[pos=0.55,yshift=-0.5cm,inner sep=1,fill=white]{No } (blockc54);

\draw[-latex] (blockc53) -| node[pos=0.55,yshift=-1cm,inner sep=1,fill=white] {Yes} (blockc63);
\draw[-latex] (blockc53) -|node[pos=0.55,yshift=-0.5cm,inner sep=1,fill=white]{No } (blockc64);
}
\end{tikzpicture}
 \caption{Decision tree for the utility of side entanglement in qubit channel discrimination. Rectangles stand for tests and ellipses for decisions. Symbols refer to Eqs.\eqref{al2}-\eqref{g22} 
 with $\gamma_m$, $\gamma_M$ and $P$ given in \eqref{gmM}, \eqref{Pdef}.  } 
\end{figure}
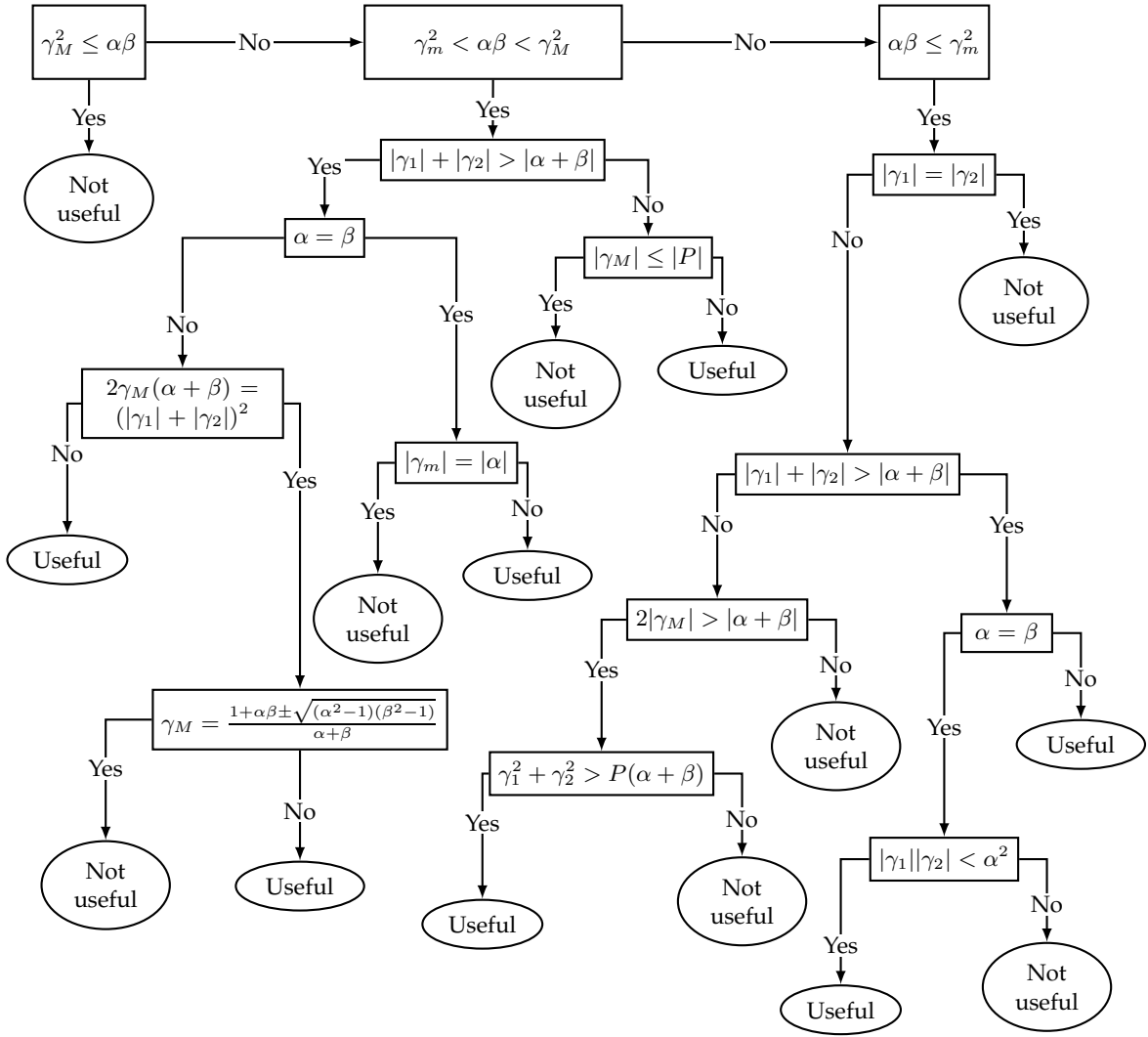


\subsection*{Acknowledgments}
S.M. and M.R. acknowledge funding from the European Union’s
Horizon 2020 research and innovation programme, 
under grant agreement QUARTET No 862644.




\begin{thebibliography}{99}

\bibitem{acin}
 A. Acin,
 \textit{Statistical distinguishability between unitary operations}, 
 Physical Review Letters, \textbf{87}, 177901 (2001).

\bibitem{Aneilsen}
 A. Gilchrist, N. K. Langford, and M. A. Nielsen, 
 \textit{Distance measures to compare real and ideal quantum processes}, 
 Physical Review A \textbf{71}, 062310 (2005).
 
\bibitem{Sacchi2005}
M. Sacchi, \textit{Optimal discrimination of quantum operations}, 
Physical Review A \textbf{71}, 062340 (2005).

\bibitem{Mying}
G. Wang, and M. Ying,
\textit{Unambiguous discrimination among quantum operations}, 
Physical Review A \textbf{73}, 042301 (2006).

\bibitem{SPSM}
 S. Pirandola, C. Lupo, V. Giovannetti, S. Mancini, and S. L. Braunstein,
 \textit{Quantum reading capacity}, 
 New Journal of Physics \textbf{13}, 113012 (2011).
 
\bibitem{Jrenes}
 A. M. Childs, J. Preskill, and J. Renes,
 \textit{ Quantum information and precision measurement}, 
 Journal of Modern Optics, \textbf{47}, 155-176 (2000).
 
 \bibitem{Brosgen}
B. Rosgen, and J. Watrous,
\textit{On the hardness of distinguishing mixed-state quantum computations}, Proceedings of the 20th IEEE Conference on Computational Complexity, pp. 344-354 (2005).

\bibitem{Hel}
C.W. Helstrom,
\textit{Quantum Detection and Estimation Theory},
Academic Press, New York (1976).

{\ 
\bibitem{chandan}
C. Datta, T.Biswas, D. Saha, and R. Augusiak,
\textit{Perfect discrimination of quantum measurements using entangled systems},
New Journal of Physics \textbf{23}, 043021(2021).
 }

\bibitem{MSacchi}
M. Sacchi, \textit{Entanglement can enhance the distinguishability of entanglement-breaking channels}, 
Physical Review A \textbf{72}, 014305 (2005).

{\  
\bibitem{Puz17}
D. Puzzuoli, and J. Watrous,
\textit{Ancilla Dimension in Quantum Channel Discrimination},
Annales Henri Poincar\'e \textbf{18}, 1153-1184 (2017).
}

\bibitem{RSW}
M. B. Ruskai, S. Szarek, and E. Werner,
\textit{An analysis of completely positive trace-preserving maps on ${\cal M}_2$},
Linear Algebra and its Applications \textbf{347}, 159-187 (2002).

\bibitem{RMP14}
F. Caruso, V. Giovannetti, C. Lupo and S. Mancini,
\textit{Quantum channels and memory effects},
Reviews of Modern Physics \textbf{86}, 1203 (2014).

\bibitem{RMM}
M. Rexiti, L. Memarzadeh, and S. Mancini,
\textit{Discrimination of dephasing channels},
Journal of Physics A: Mathematical and Theoretical \textbf{55}, 245301 (2022).

\bibitem{MilaJPA}
M. Rexiti, and S. Mancini,
\textit{Discriminating qubit amplitude damping channels},
Journal of Physics A: Mathematical and Theoretical \textbf{54}, 165303 (2021).

\bibitem{Max}
M. F. Sacchi,
\textit{Minimum error discrimination of Pauli channels},
Journal of Optics B: Quantum and Semiclassical Optics \textbf{7}, S333 (2005).

{\ 

\bibitem{Gyo1}
L. Gyongyosi, S. Imre, and H.V. Nguyen, 
\textit{A Survey on Quantum Channel Capacities}, 
IEEE Communications Surveys and Tutorials \textbf{20}, 1149-1205 (2018).

\bibitem{Gyo2}
L. Gyongyosi, and S. Imre, 
\textit{Advances in the Quantum Internet}, 
Communications of the ACM \textbf{65}, 52-63 (2022).
 
\bibitem{Hir21} 
C. Hirche, \textit{Quantum Network Discrimination}, 
arXiv:2103.02404.
}

\bibitem{F01} Analogously entanglement can help in distinguishing between quantum measurements \cite{chandan}.
\bibitem{F02} A single point has to be interpreted as a pair of degenerate points in some special cases.
\bibitem{F03} Actually $\omega_1$ and $\omega_2$ can also be two parameters' vectors.
\end{thebibliography}
\end{document}